\newtheorem{theorem}{Theorem}
\newtheorem{definition}{Definition}
\newtheorem{remark}{Remark}
\definecolor{intro_color}{RGB}{150,150,150}
\begin{document}
	\title{Dominant Mixed Feedback Design \\ for Stable Oscillations}
	\author{Weiming Che and Fulvio Forni
		\thanks{W. Che is supported by CSC Cambridge Scholarship. W. Che and F. Forni are with the Department of Engineering, University of Cambridge, CB2 1PZ, UK {\tt\small wc289|f.forni@eng.cam.ac.uk}}}
	
	\maketitle
	
	\begin{abstract}
		We present a design framework that combines positive and negative feedback for
		robust stable oscillations in closed loop. The design is initially based on graphical methods, 
		to guide the selection of the overall strength of the feedback (gain) and of the relative proportion of positive and
		negative feedback (balance). The design is then generalized via linear matrix inequalities. The goal 
		is to guarantee robust oscillations to bounded dynamic uncertainties and to extend the approach
		to passive interconnections.
		The results of the paper provide a first system-theoretic justification to several observations from 
		system biology and neuroscience pointing at mixed feedback as a fundamental enabler for robust oscillations. 
	\end{abstract}
	
	\begin{IEEEkeywords}
	Nonlinear control systems, control system synthesis, closed loop systems, mixed-feedback, linear matrix inequalities.
	\end{IEEEkeywords}
	
	\section{Introduction}
	In control theory we use negative feedback to reduce the error between desired and actual outputs and to {modulate the system closed-loop} sensitivity to uncertainties and disturbances \cite{aastrom2014control}. {The typical goal is to mitigate the effect of nonlinearities and uncertainties, taking advantage of negative feedback to normalize the behavior of a system}. By contrast, positive feedback amplifies the feedback error, leading to instabilities, which often manifest as hysteresis and oscillations \cite{tucker1972history}. This is why positive feedback is less explored in engineering. At the same time, several contributions in system biology and neuroscience point to the fact that nature seems to rely on both positive and negative feedback to generate resilient nonlinear behaviors. Encouraged by this contrast, in this paper we put positive and negative feedback on equal ground. We look at the combination of positive and negative feedback as a key mechanism to achieve rich nonlinear behaviors in closed loop.
	
	{The combination of positive and negative feedback, or \emph{mixed feedback}, is a recurring structure in biological oscillators from molecular level to network level. In their simplest form such oscillators are realized by biochemical reactions such as cell cycles \cite{tsai2008robust}, circadian rhythms \cite{smolen2001modeling} and various gene regulatory circuits \cite{tyson2008biological,mitrophanov2008positive}. At the cellular level, mixed feedback is at the core of the excitability of neurons, which are capable of generating sustained oscillations in terms of repeated spiking and bursting \cite{hodgkin1952quantitative,Drion2015b}. At the network level, more examples of mixed feedback can be found in sensory processing \cite{lee2018combined} and in control of rhythmic movements \cite{marder2001central}.  
	The role of mixed feedback is not only to enable oscillations but also to make these oscillators robust and adaptive, see e.g. \cite{tsai2008robust,ananthasubramaniam2014positive} for biochemical oscillators and \cite{sepulchre2019controlb,sepulchre2019controla,lee2018combined} for neural oscillators.
	In engineering, mixed feedback loops can be traced in the biochemical oscillators of synthetic biology \cite{Novak2008,o2012modeling}, in neuromorphic circuits \cite{ribar2019neuromodulation}, and in robotic locomotion \cite{kimura1999realization} (see also \cite{ijspeert2007swimming,ijspeert2008central} for an introduction to the role of endogenous oscillators in robotic locomotion).
	 The use of mixed feedback in these domains
	often relies on a few design tools, among which we have} harmonic balance methods \cite{vander1968multiple,iwasaki2008multivariable}, and specific methods for relaxation oscillations \cite{aastrom1995oscillations}. 
	
	{The discussion and the design methods developed in our paper} are
	strongly influenced by the system-theoretic characterization of neuronal excitability in \cite{Drion2015b,sepulchre2019controla,sepulchre2019controlb}. These papers point at mixed feedback as a fundamental enabler for nonlinear behaviors. We build upon this view. Our goal is to develop {a systematic design based on mixed feedback for stable oscillations in closed loop}. As in the preliminary results of \cite{che2021tunable}, we investigate the simplest realization of a \emph{mixed feedback controller} given by the parallel interconnection of two stable first order linear networks, whose action is combined into a sigmoidal nonlinearity. 
 The mixed feedback controller is regulated by the \emph{gain}, $k$, which controls the overall feedback strength, and by the \emph{balance}, $\beta$, which controls the relative strength between positive and negative feedback actions (Section \ref{Sec:mixed feedback controller}). The sigmoidal saturation is the most common nonlinearity in physical systems, representing the finite voltage supplied to an electrical circuit and the limited force/torque supplied to a mechanical system. The problem of mixed feedback design is thus the problem of finding suitable values for the control parameters to guarantee stable oscillations in closed loop. 
	
	The analysis and the design approaches proposed in this paper are based on dominance theory and differential dissipativity \cite{forni2018differential,miranda2018analysis}. {We use dominance theory to determine whether a closed-loop high-dimensional mixed-feedback system has a low dimensional ``dominant’’ behavior. We take advantage of the fact that} the attractors of a $2$-dominant system correspond to the attractors of a planar system ($2$-dimensional). This means that the Poincar\'e-Bendixon theorem can be used on a 2-dominant system to certify oscillations, even if the system has a large dimension. {Later in the paper we use differential dissipativity to extend our results to open nonlinear systems. The goal is to develop a robust design framework for mixed-feedback oscillators, which mimics classical linear robust theory, and to propose a passivity-based approach for interconnections}.
	
	{The novel contributions of the paper are summarized below:	
	(i) via root locus analysis (Section \ref{Section:Root_Locus}), we show why fast positive feedback and slow negative feedback are needed for 2-dominance of the mixed feedback closed loop, thus to support oscillations. Our findings, based on necessary conditions for $2$-dominance, agree with and support several observations from biology, where positive feedback is typically fast. 
	(ii)
	Section \ref{Section:Sufficiency} combines circle criterion for dominance \cite{miranda2018analysis} and local stability analysis to derive sufficient conditions for stable oscillations. This leads to basic rules for the selection of the gain and the balance of the mixed feedback controller. The discussion in Section \ref{Section:Sufficiency} combines and extends the early results of \cite{che2021tunable,che2021shaping}. To the best of our knowledge, the analysis of robustness in Section \ref{Section:Example1_robustness} is new and shows how classical Nyquist arguments can be used to provide a graphical quantification of robustness for mixed-feedback oscillators.
	(iii)
	 From Section \ref{Section:p-dissipativity} we generalize our approach by using linear matrix inequalities (LMIs) for oscillator design. LMIs for dominance analysis were presented in \cite{forni2018differential,miranda2018analysis} and we extend their use for design purposes, taking advantage of the particular structure of the mixed-feedback controller.
	(iv)
	Section \ref{Sec:Robust_and_Interconnection} provides a state-feedback design to guarantee robustness of oscillations to bounded uncertainties. We also tackle passive interconnections. Our design shows strong similarities with classical approaches for stabilization of equilibria. However, the difference is that the mixed-feedback controller ``destabilizes” the closed-loop system while enforcing a specific degree of dominance. This ensures oscillations and is achieved by a new constraint on the inertia of the matrix $Y$ in \eqref{eq:LMI_2_dominant}, \eqref{eq:LMI_unstable_origin}, \eqref{eq:Robustness_LMI}, and
\eqref{eq:Gerenal_passive_interconnection_LMI}. 

	From a more general perspective, the paper shows that dominance theory and differential dissipativity can be leveraged for the design of closed-loop oscillators in mixed-feedback setting. Our approach leads to constructive conditions for synthesis based on graphical methods and LMIs. This allows for the design of oscillators that are robust to prescribed level of uncertainties.  We believe that our paper contributes to clarifying the role of mixed feedback in enabling robust oscillations, as observed by other scientific communities (e.g., system biology, neuroscience). To the best of our knowledge, these observations are of qualitative nature, based on extensive simulations and on phase plane analysis (on reduced system dynamics). In contrast, our paper deals with large dimensional systems and provide quantitative conditions for robust design of mixed-feedback oscillators.}
		
	\section{Mixed Feedback Controller}
	\label{Sec:mixed feedback controller}
	
	{To focus our analysis on the interplay between positive and negative feedback for the generation
	of endogenous oscillations, we consider a  closed loop with minimal nonlinearities. 
	The \emph{mixed feedback closed loop} is illustrated in Figure \ref{fig:Block}. It consists of linear plant dynamics, $\mathcal{P}(s)$, and mixed feedback controller, given by positive and negative passive linear networks, $\mathcal{C}_p(s)$ and $\mathcal{C}_n(s)$, and by a static saturation nonlinearity $\varphi$.}
		We assume that $\mathcal{P}(s)$ is an asymptotically stable and strictly proper single input single output transfer function. $\mathcal{C}_p(s)$ and $\mathcal{C}_n(s)$ are first order lags where $\tau_p$ and $\tau_n$ are the corresponding time constants
	\begin{equation}\label{eq:Mixed_Feedback_Controller}
		\mathcal{C}_p(s)=\frac{1}{\tau_p s+1} \ , \quad  \mathcal{C}_n(s)=\frac{1}{\tau_n s+1} \ .
	\end{equation}
	{
	These assumptions are motivated by reasons of simplicity and can be relaxed, 
	as clarified in later sections. } 
	
	\begin{figure}[!h]
		\centering
		\includegraphics[width=0.4\textwidth]{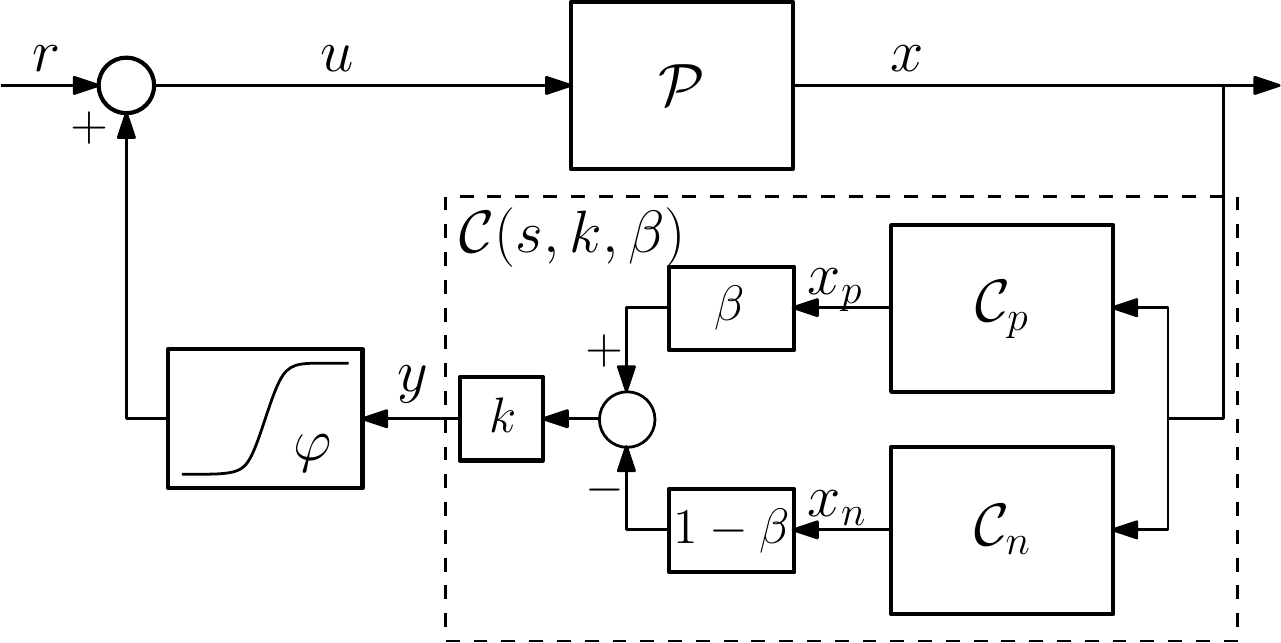}
		\caption{Block diagram of the mixed feedback closed loop system.}
		\label{fig:Block}
	\end{figure}
	
	Here and in what follows we assume that {$\tau_p \neq \tau_n$ and that both time constants are} \emph{slower} than any of the plant time constant. 
	{The motivation for these assumptions is to show that the generation of oscillations via mixed feedback works also for non-resonant plants, whose dynamics typically
	have a fast decay rate. In practice, we are free to choose the parameters of our controller, thus this assumption does not pose a strong limitation.}
	
	The action of the mixed feedback controller is regulated by the parameters $k\geq 0$ and $0 \leq \beta \leq 1$, where $k$ regulates the overall feedback \emph{gain} while $\beta$ regulates the \emph{balance} between positive and negative feedback. The {linear part of} the mixed feedback controller {has transfer function}:\begin{equation}\label{eq:mixed_feedback_controller}
		\mathcal{C}(s,k,\beta)=\frac{k\Big(\big(\beta(\tau_n+\tau_p)-\tau_p\big)s+2\beta-1\Big)}{(\tau_ps+1)(\tau_ns+1)}
	\end{equation}
	The static nonlinearity $\varphi$ {represents a  monotone, slope restricted, bounded actuation stage}. In this paper $\varphi$ {is a differentiable, sigmoidal function with slope $0 \leq \partial \varphi \leq 1$ (here and in what follows $\partial \varphi$ denotes $\frac{\partial \varphi(y) }{\partial y}$).} We also assume that $|\varphi| \leq M$, for some finite number $M$. This guarantees the \emph{boundedness of the closed loop trajectories} for any selection of the feedback parameters $k$ and $\beta$. 
	For simplicity, each simulation in this paper will adopt $\varphi=\tanh$.
	
	The closed loop can be represented as a Lure system as shown in Figure \ref{fig:Lure_theory}, where 
	\begin{equation}\label{eq:linear_tf}
		G(s,k,\beta)=-\mathcal{C}(s,k,\beta)\mathcal{P}(s)
	\end{equation}
	
	\section{Dominance Theory} 
	\label{Sec:Dominance_Th}
	{Our analysis (and design) of the mixed feedback controller uses tools from dominance theory and differential dissipativity \cite{forni2018differential,miranda2018analysis}, which show strong contact points with the theory of monotone systems with respect to high rank cones \cite{Sanchez2009,Sanchez2010} and with contributions that extend the Poincar\'e-Bendixon theorem to large dimensional systems \cite{Smith1980}, \cite{Smith1986}. We refer the reader to \cite[Section V]{forni2018differential} for a detailed comparison with the literature.}
	
	{Dominance theory extends several classical tools of linear control theory to the analysis
	of nonlinear systems with complex attractors like limit cycles. Through conic constraints, dominance theory shows that the dynamics of a nonlinear system
	can be partitioned} into ``fast fading" dynamics and ``slow dominant" dynamics {\cite[Theorem 1]{forni2018differential}}. The dominant dynamics, typically of small dimension, drive the asymptotic behavior of the nonlinear system {\cite[Theorem 2]{forni2018differential}}. {These dynamics are not necessarily stable and may generate multi-stable or oscillatory behaviors, as clarified in Theorem \ref{th:p-attractor}, below. 
	In what follows we summarize the main results of the theory, which are later used in the paper. For a detailed introduction we refer
	the reader to \cite{forni2018differential} and \cite{miranda2018analysis}}. 
	
	{Consider a nonlinear system of the form}
	\begin{equation} \label{theorem:general_nonlinear}
		\dot{x}=f(x)\qquad x\in\mathbb{R}^n \ ,
	\end{equation}
	{where $f$ is a continuously differentiable vector field.
	Dominance theory makes use of the system linearization along arbitrary trajectories, characterized by the prolonged system} 
	\begin{equation}\label{theorem:prolonged_sys}
		\begin{cases}
			\dot{x}=f(x)\\
			\delta\dot{x}=\partial f(x)\delta x
		\end{cases}\quad (x,\delta x)\in \mathbb{R}^n\times\mathbb{R}^n \ .
	\end{equation}
	Here $\partial f(x)$ represents the Jacobian of $f$ computed at $x$. {$\delta x \in \mathbb{R}^n$ represents a generic tangent vector at $x\in \mathbb{R}^n$. As usual, $\dot{x}$ and $\delta \dot{x}$ are short notations for $\frac{d}{dt} x$ and $\frac{d}{dt} \delta x$, respectively. 
	Along any generic trajectory $x(\cdot)$ of \eqref{theorem:general_nonlinear}, the related sub-trajectory $\delta x(\cdot)$ of \eqref{theorem:prolonged_sys} can be interpreted as a small perturbation / infinitesimal variation, as clarified in \cite{Forni2014}.} 
	
	\begin{definition}{\cite[Definition 2]{forni2018differential}}\label{de:p-dominance}
		The nonlinear system \eqref{theorem:general_nonlinear} is $p$-\emph{dominant with rate} $\lambda\geq0$ if and only if there exist a symmetric matrix $P$ with inertia $(p,0,n-p)${\footnote{A symmetric matrix $P$ with inertia $(p,0,n-p)$ has $p$ negative eigenvalues and $n-p$ positive eigenvalues.}} and $\varepsilon\geq0$ such that the prolonged system \eqref{theorem:prolonged_sys} satisfies the conic constraint:
		\begin{equation}\label{eq:dominance_LMI}
			\begin{bmatrix}
				\delta\dot{x}\\
				\delta x
			\end{bmatrix}^T \begin{bmatrix}
				0&P\\P&2\lambda P+\varepsilon I
			\end{bmatrix}\begin{bmatrix}
				\delta\dot{x}\\
				\delta x
			\end{bmatrix}\leq 0
		\end{equation}
		for all $(x,\delta x) \in \mathbb{R}^{2n}$. The property is strict if $\varepsilon>0$.$\hfill\lrcorner$
	\end{definition}
	{
	The conic constraint \eqref{eq:dominance_LMI} can be equivalently formulated as the following matrix inequality
	$$
	(\partial f(x)^T + \lambda I) P + P (\partial f(x) + \lambda I)  \leq -\varepsilon I \qquad \forall x \in \mathbb{R}^n \ .
	$$
	This inequality shows how $p$-dominance} guarantees that $n-p$ eigenvalues of the Jacobian matrix $\partial f(x)$ lie to the left of $-\lambda$ while the remaining $p$ eigenvalues lie to the right of $-\lambda$, for each $x$ {\cite[Theorem 3]{forni2018differential}}. This splitting, {uniform with respect to $x$, is a \emph{necessary condition} for dominance.} 
	
	Dominance can be also characterized in the frequency domain, for systems that have a Lure representation 
	 {as in Figure \ref{fig:Lure_theory}. The following conditions are particularly useful for design purposes.}	
	
	\begin{theorem}{\cite[Corollary 4.5]{miranda2018analysis}} \label{th:circle_cirteria}
		Consider the Lure feedback system in Figure \ref{fig:Lure_theory} given by the negative feedback interconnection of the linear system $G(s)$ and the static nonlinearity $\varphi$, satisfying the sector condition {$0 \leq \partial\varphi \leq K$}. The closed system is strictly $p$-dominant with rate $\lambda$ if
		\begin{enumerate}
			\item the real part of all the poles of $G(s)$ is not $-\lambda$;
			\item the shifted transfer function $G(s-\lambda)$ has $p$ unstable poles;
			\item the Nyquist plot of $G(s-\lambda)$ lies to the right of the vertical line passing through the point $-1/K$ on the Nyquist plane.$\hfill\lrcorner$
		\end{enumerate} 
	\end{theorem}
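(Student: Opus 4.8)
The plan is to collapse the statement to a single frequency-domain-to-LMI implication handled by the Kalman--Yakubovich--Popov (KYP) lemma, in the refined form that also controls the \emph{inertia} of the certificate. First I would fix a minimal state-space realization $(A,B,C)$ of $G(s)$ (strictly proper, hence no feedthrough), so that the closed loop reads $\dot x = Ax - B\varphi(Cx)=:f(x)$ and $\partial f(x) = A - B\,\partial\varphi(Cx)\,C = A - B\psi C$, where the scalar $\psi:=\partial\varphi(Cx)$ ranges over $[0,K]$ as $x$ varies. Using the matrix-inequality form of Definition~\ref{de:p-dominance}, strict $p$-dominance with rate $\lambda$ is exactly the existence of a symmetric $P$ with inertia $(p,0,n-p)$ and $\varepsilon>0$ such that $(\bar A - B\psi C)^T P + P(\bar A - B\psi C) \preceq -\varepsilon I$ for all $\psi\in[0,K]$, where $\bar A:=A+\lambda I$. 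Note that $(\bar A,B,C)$ realizes $C(sI-\bar A)^{-1}B = G(s-\lambda)$, so hypothesis~1 says precisely that $\bar A$ is hyperbolic, and hypothesis~2 that $\bar A$ has exactly $p$ eigenvalues in the open right half-plane.

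The second step turns the slope bound into one quadratic constraint and applies the S-procedure (which, for a single constraint, is moreover lossless, though only sufficiency is needed here). Setting $\delta z:=\psi\, C\delta x$, the restriction $\psi\in[0,K]$ is equivalent to the single inequality $\delta z(\delta z - K\,C\delta x)\le 0$. It therefore suffices to find $P$, a multiplier $\tau>0$ and $\varepsilon>0$ with
\begin{equation*}
\begin{bmatrix}\bar A^T P + P\bar A + \varepsilon I & -PB + \tfrac{\tau K}{2}C^T\\[2pt] -B^T P + \tfrac{\tau K}{2}C & -\tau\end{bmatrix}\preceq 0 .
\end{equation*}
This is a KYP-type LMI for the triple $(\bar A,-B,C)$, and its solvability in $P$ is equivalent to the frequency-domain counterpart evaluated at $s=j\omega$. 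A direct computation reduces that inequality (take $\tau=1$ and $\varepsilon>0$ small; strict properness makes the $\omega=\infty$ case harmless) to $\operatorname{Re} G(j\omega-\lambda) > -1/K$ for all $\omega\in\mathbb{R}$, which is exactly hypothesis~3.

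The last step is the inertia count. Because $\bar A$ has no eigenvalues on the imaginary axis, the non-Hurwitz version of the KYP lemma (equivalently, the theory of the stabilizing solution of the associated algebraic Riccati equation with indefinite data) yields not merely \emph{some} solution $P$ of the LMI but one whose inertia matches that of $\bar A$; by minimality the number of right-half-plane eigenvalues of $\bar A$ equals the number of unstable poles of $G(s-\lambda)$, which is $p$ by hypothesis~2, so $P$ has inertia $(p,0,n-p)$. Undoing the shift, $(\partial f(x)+\lambda I)^T P + P(\partial f(x)+\lambda I)\preceq -\varepsilon' I$ for all $x$, i.e., the closed loop is strictly $p$-dominant with rate $\lambda$. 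I expect the delicate point to be precisely this last step: the classical KYP lemma presupposes $\bar A$ Hurwitz and returns $P\succeq 0$, so one must appeal to the refinement that allows indefinite data and pins down the inertia of the Riccati solution --- this is exactly where hypothesis~1 (no pole of $G(s)$ with real part $-\lambda$, i.e.\ hyperbolicity of $\bar A$) is indispensable, and one should also keep track of minimality so that eigenvalues of $\bar A$ and poles of $G(s-\lambda)$ genuinely coincide.
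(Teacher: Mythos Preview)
The paper does not actually prove this theorem: it is quoted verbatim from \cite[Corollary 4.5]{miranda2018analysis} and used as a black-box tool throughout, so there is no in-paper proof to compare against. Your sketch is the standard route to such a circle criterion (S-procedure to absorb the slope bound, then KYP on the shifted realization $(\bar A,B,C)$), and the reduction of hypothesis~3 to the frequency-domain inequality associated with your block LMI is correct.

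One point can be streamlined substantially, and it removes the part you flagged as ``delicate.'' You do not need the Riccati/stabilizing-solution machinery to pin down the inertia of $P$. Any symmetric $P$ solving your $2\times2$ block LMI in particular satisfies its top-left block, namely $\bar A^{T}P+P\bar A\preceq-\varepsilon I\prec 0$. Since hypothesis~1 makes $\bar A=A+\lambda I$ hyperbolic and hypothesis~2 gives it exactly $p$ eigenvalues in the open right half-plane, the Ostrowski--Schneider (generalized Lyapunov) inertia theorem forces \emph{every} such $P$ to have inertia $(p,0,n-p)$. Thus the inertia constraint is automatic once the LMI is feasible, and the only tool you need beyond the S-procedure is the general KYP lemma valid for $\bar A$ with no imaginary-axis eigenvalues and $(\bar A,B)$ controllable (both ensured by minimality and hypothesis~1). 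This is in fact how the paper itself handles inertia in the LMI setting (see the remark following \eqref{eq:LMI_excess/shortage_passivity}): the rate $\lambda$ splitting the spectrum is what fixes the inertia, not a special choice of solution.
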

	
	\begin{figure}[htbpt]
		\centering
		\includegraphics[width=0.28\textwidth]{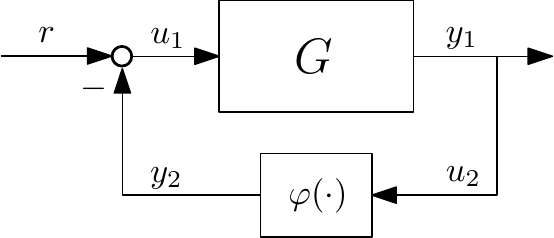}
		\caption{Lure feedback system}
		\label{fig:Lure_theory} \label{fig:Fixed_point_Block}
	\end{figure}	
	
	We are particularly interested in $p$-dominant systems with a small degree $p \leq 2$. 
	{A small degree guarantees that the nonlinear system possesses a simple attractor
	 (not necessarily an equilibrium), as clarified below.}
	\begin{theorem}{\cite[Corollary 1]{forni2018differential}}\label{th:p-attractor}
		Consider a $p$-dominant system $\dot{x}=f(x)$, $x\in\mathbb{R}^n$, with dominant rate $\lambda\ge0$. Every bounded trajectory of the system asymptotically converges to
		\begin{itemize}
			\item a unique fixed point if $p=0$;
			\item a fixed point if $p=1$;
			\item a simple attractor if $p=2$, that is, a fixed point, a set of
			fixed points and connecting arcs, or a limit cycle.$\hfill\lrcorner$
		\end{itemize} 
	\end{theorem}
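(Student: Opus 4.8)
The plan is to exploit the cone structure carried by the sign-indefinite matrix $P$ and reduce the asymptotic analysis to that of a $p$-dimensional flow; the statement is then a specialization of \cite[Theorems 1--2]{forni2018differential} to $p\in\{0,1,2\}$.

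\emph{Step 1: the cone and the dissipation inequality.} First I would rewrite the conic constraint \eqref{eq:dominance_LMI}, evaluated along the prolonged system \eqref{theorem:prolonged_sys}, as the pointwise estimate
\[
\frac{d}{dt}\big(\delta x^T P\,\delta x\big)=2\,\delta x^T P\,\partial f(x)\,\delta x\le -2\lambda\,\delta x^T P\,\delta x-\varepsilon\,|\delta x|^2 ,
\]
so that $t\mapsto e^{2\lambda t}\,\delta x(t)^T P\,\delta x(t)$ is nonincreasing and, under strictness ($\varepsilon>0$), strictly decreasing whenever $\delta x\neq 0$. Since $P$ has inertia $(p,0,n-p)$, the constant quadratic cone $\mathcal{K}:=\{\delta x:\delta x^T P\,\delta x\le 0\}$ has \emph{rank $p$} — it contains $p$-dimensional subspaces but no $(p{+}1)$-dimensional one. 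The estimate above says precisely that the variational flow maps $\mathcal{K}$ into its interior (away from the origin), i.e.\ that the flow of $f$ is strictly monotone with respect to the rank-$p$ cone $\mathcal{K}$, in the sense of \cite{Sanchez2009,Sanchez2010,Smith1980,Smith1986}.

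\emph{Step 2: dimension reduction of $\omega$-limit sets.} Fix a bounded trajectory $x(\cdot)$ and let $\Omega$ be its $\omega$-limit set, which is nonempty, compact, connected, invariant, and on which the flow extends to all of $\mathbb{R}$. The core step is to show that the flow restricted to $\Omega$ is no richer than a $p$-dimensional flow. This follows from the rank-$p$ strict monotonicity of Step~1 combined with the high-rank Poincar\'e--Bendixson theory: a strongly monotone flow with respect to a cone of rank $p$ has $\omega$-limit sets that are ordered like, and governed by, the dynamics of a $p$-dimensional system. (Heuristically this also matches the eigenvalue splitting recalled after Definition~\ref{de:p-dominance}: the $n-p$ ``fast'' directions contract faster than $e^{-\lambda t}$, so $\Omega$ collapses onto a $p$-dimensional dominant behavior.) I expect \emph{this to be the main obstacle}: it is where one must invoke the cone-monotone Poincar\'e--Bendixson machinery and carefully check its hypotheses — in particular that strictness $\varepsilon>0$ yields the strong monotonicity needed to exclude neutrally stable behavior, and that rest points of the reduced $p$-dimensional flow lift to genuine equilibria of $f$ (using invariance and the eigenvalue splitting).

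\emph{Step 3: the three cases.} With the reduction available the conclusion is immediate. If $p=0$ then $P>0$ and the estimate is a strict contraction metric, so any two trajectories converge exponentially; a bounded trajectory then forces an equilibrium (its $\omega$-limit set is a compact invariant set on which a contracting complete flow can only live at a single point), and contraction makes that equilibrium attract every bounded trajectory, hence it is the unique fixed point. If $p=1$, the reduced flow on $\Omega$ is scalar, and bounded scalar orbits converge monotonically to a rest point, so $\Omega$ is a fixed point. If $p=2$, the reduced flow on $\Omega$ is planar, and the classical Poincar\'e--Bendixson theorem gives the trichotomy: $\Omega$ is a fixed point, a periodic orbit, or a union of fixed points joined by connecting arcs — that is, a simple attractor.
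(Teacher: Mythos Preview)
The paper does not prove this theorem: it is quoted verbatim as \cite[Corollary~1]{forni2018differential} and used as a black box throughout. There is therefore no ``paper's own proof'' to compare against. Your sketch is a faithful outline of how the cited result is actually obtained in \cite{forni2018differential} and the companion literature \cite{Sanchez2009,Sanchez2010,Smith1980,Smith1986}: the quadratic cone $\mathcal{K}=\{\delta x:\delta x^TP\,\delta x\le 0\}$ is invariant and strictly contracted by the variational flow, this yields strong monotonicity with respect to a rank-$p$ cone, and the high-rank Poincar\'e--Bendixson theory then constrains $\omega$-limit sets to behave like those of a $p$-dimensional system. Your identification of Step~2 as the real work --- checking that strictness gives strong monotonicity and that the reduced rest points lift to genuine equilibria --- is exactly right; that is where \cite[Theorems~1--2]{forni2018differential} do the heavy lifting, and your sketch appropriately defers to them rather than reproving the cone-monotone machinery.
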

	
	Theorem \ref{th:p-attractor} shows how dominance theory can be used to shape the behavior of the mixed-feedback closed loop.
	{The first step is to \emph{find the set of parameters that guarantees a low degree of dominance}.} For $p=0$ the mixed-feedback closed loop is contractive. That is, the system relaxes to a single steady-state behavior for any given reference $r$. Likewise, for $p = 2$ the system behaves like a planar system.
	{Then, the second step is to enforce oscillations by \emph{deriving the subset of the control parameters that guarantees $2$-dominance and instability of closed-loop equilibria} at the same time. For these parameters, Theorem \ref{th:p-attractor} guarantees that
	a system with bounded trajectories must have a limit cycle.
	In what follows we will use the terminology of \emph{stable oscillations} to denote
	the existence of a stable limit cycle in the state space of the nonlinear system \eqref{theorem:general_nonlinear}.}

	\section{Fast positive / slow negative mixed \\ feedback for $2$-dominant systems}
	\label{Section:Root_Locus}
	{The connection between $2$-dominance and oscillations informs the selection of the
	mixed-feedback control parameters. The goal is to  
	identify parameter ranges that are compatible with $2$-dominance, thus with
	oscillations. We focus on the two time constants 
	$\tau_p$ and $\tau_n$, on the balance $\beta$, and on the gain $k$.  
	Our analysis shows that the combination of \emph{fast positive feedback and slow negative feedback}, 
	$\tau_p < \tau_n$, is a necessary condition for $2$-dominance. This agrees with several observations 
	from biology and neuroscience, which show how fast positive / slow negative feedback is a source of robust oscillations.} 
	
	{We focus on the eigenvalues of the Jacobian of the closed-loop system, computed at a generic point $x$
	of the system state space. The goal is to identify the set of parameters that guarantees a uniform splitting of these
	eigenvalues, for all $x$ (necessary condition for dominance). For fixed time constants $\tau_p$ and $\tau_n$, the eigenvalues of the Jacobian
	of the linearized mixed feedback closed loop are necessarily contained within the root loci of $G(s,1,\beta)$, for $\beta \in [0,1]$. 
	This follows from the fact that $\varphi$ is the only nonlinearity of the system and its derivative satisfies $0 \leq \partial\varphi  \leq 1$. 
	For dominance, we look for conditions that guarantee that the root loci present a splitting of the roots into two groups, respectively to the left and to the right of some dominance rate $-\lambda$. We are also interested in those situations where the closed loop loses stability, which is a necessary condition to generate oscillations.}
	
		\begin{figure}[t]
		\centering
		\includegraphics[width=0.37\textwidth]{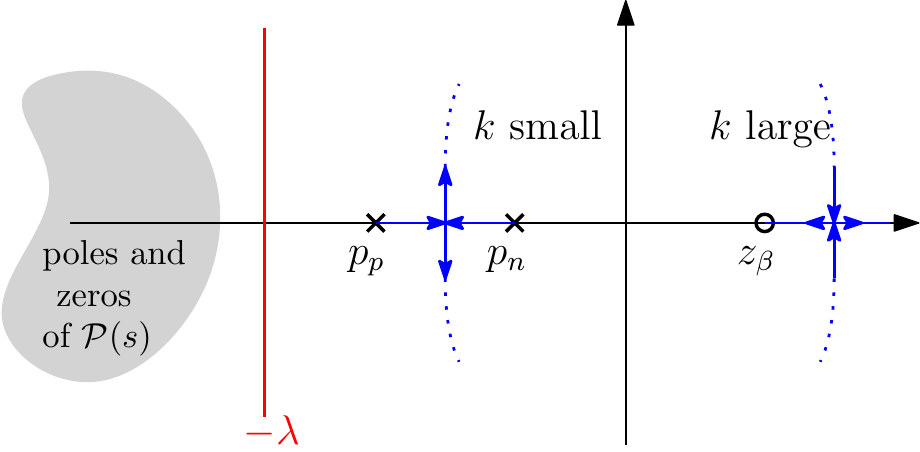}
		\vspace{-1mm}
		\caption{Root locus for $\tau_p < \tau_n$ and $\beta^* \!<\! \beta \!<\! 1$.}
		\label{fig:Rootlocus1}
	\end{figure}
	\begin{figure}[t]
		\centering
		\includegraphics[width=0.37\textwidth]{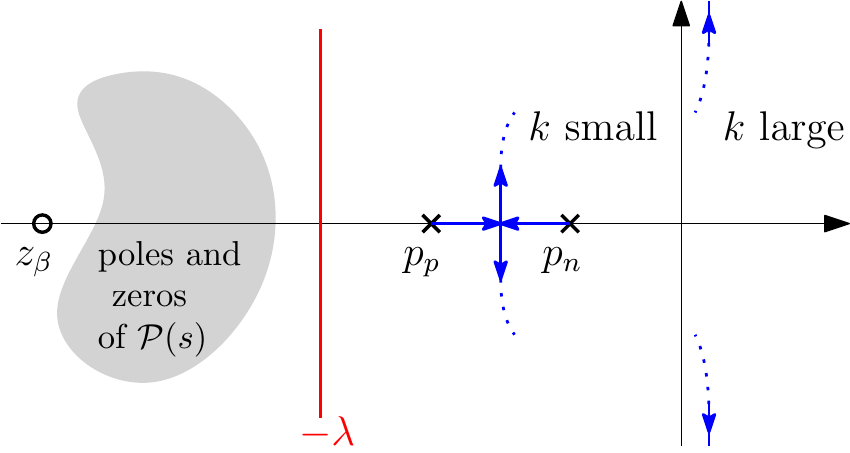}
		\vspace{-1mm}
		\caption{Root locus for $\tau_p < \tau_n$ and $0 \!<\! \beta \!<\! \beta^*$.}
		\label{fig:Rootlocus2}
	\end{figure}
	
		\begin{figure}[t]
		\centering
		\includegraphics[width=0.37\textwidth]{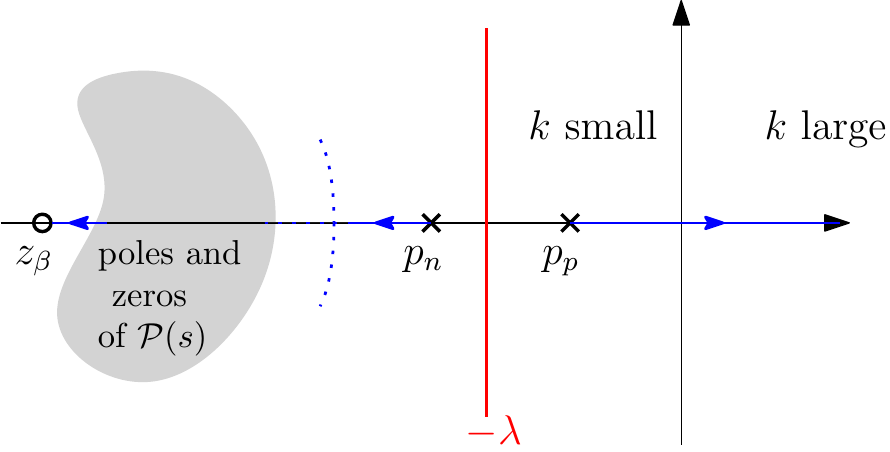}
		\vspace{-1mm}
		\caption{Root locus for $\tau_p > \tau_n$ and $\beta^* \!<\! \beta \!<\! 1$.}
		\label{fig:Rootlocus3}
	\end{figure}
	
	\begin{figure}[t]
		\centering
		\includegraphics[width=0.37\textwidth]{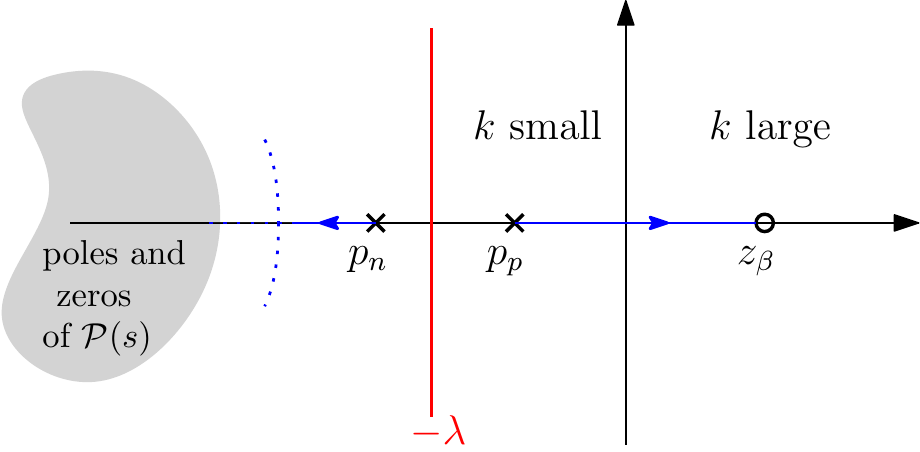}
		\vspace{-1mm}
		\caption{Root locus for $\tau_p > \tau_n$ and $0 \!<\! \beta \!<\! \beta^*$.}
		\label{fig:Rootlocus4}
	\end{figure}
	
	We will consider two arrangements of the relative time scale of positive and negative feedback:
	(i)~fast positive/slow negative feedback: $\tau_p\!<\!\tau_n$ and
	(ii)~slow positive/fast negative feedback: $\tau_p\!>\!\tau_n$.
	The following observations are instrumental to the graphical {analysis summarized by} Figures \ref{fig:Rootlocus1}-\ref{fig:Rootlocus4}.
	The mixed feedback controller \eqref{eq:mixed_feedback_controller} has two poles and a zero
	\begin{equation}\label{eq:TF_poles_and_zero}
		\begin{cases}
			p_p&=-\frac{1}{\tau_p}\\
			p_n&=-\frac{1}{\tau_n}\\
			z_\beta&=-\frac{2\beta-1}{\beta(\tau_p+\tau_n)-\tau_p}
		\end{cases}.
	\end{equation}
	The position of the zero $z_\beta$ is a function of the balance parameter $\beta$. 
	As $\beta$ moves within the interval $[0,1]$, $z_\beta$ explores the set $(-\infty,\min(p_p,p_n))\cup(\max(p_p,p_n),\infty)$.
	{For $\beta = 0$, $z_\beta$ corresponds to $p_p$ and for $\beta = 1$, $z_\beta$ corresponds to $p_n$. 
	$z_\beta$ crosses $0$ for $\beta = 0.5$ and it approaches $\pm\infty$ as $\beta$ approaches the critical value $\beta^*=\frac{\tau_p}{\tau_p+\tau_n}$.
	Thus, 
	\begin{itemize}
	\item for $\tau_p < \tau_n$, $z_\beta$ remains to the right of the slowest pole $p_n$ for $\beta^* < \beta < 1$ ($\beta^* < 0.5$); and
	\item for $\tau_p > \tau_n$, $z_\beta$ remains to the right of the slowest pole $p_p$ for $0 < \beta < \beta^*$ ($0.5 < \beta^*$).
	\end{itemize}}

	{For the \emph{fast positive/slow negative feedback} case, $\tau_p < \tau_n$,  taking} 
%
%
	$\beta^*< \beta < 1$ (strong positive feedback), the closed loop system admits a root locus of positive feedback convention, as shown in Figure \ref{fig:Rootlocus1}. By the assumptions on the mixed feedback closed loop in Section \ref{Sec:mixed feedback controller}, the poles of the mixed feedback controller lie to the right of the poles of the plant. This means that the open loop poles can be split into transient (plant) and dominant (controller). Furthermore, the position of the zero $z_\beta$ guarantees that this splitting persists for a sizable interval of gains $0 \leq k \leq k^*$ ($k^*$ could be $\infty$ for plant with small relative degree). For small $k \geq 0$, all the poles of the linearized system are stable. In this case, the system is compatible with $0$-dominance for $\lambda=0$. The equilibrium at 0 remains stable and no oscillations occur. For all $0 \leq k < k^*$, the system is also compatible with $2$-dominance with rate $\lambda > 1/\tau_p$ (but such that $-\lambda$ remains to the right of the poles of $\mathcal{P}(s)$). Furthermore, when  $z_\beta$ has positive real part, $k$ large enough guarantees that the poles of the linearized system cross the imaginary axis. The origin of the closed loop becomes unstable and nonlinear behaviors like multi-stability and oscillations may appear.
	
	When $0 < \beta<\beta^*$, the closed system admits a negative feedback root locus. This case is more complicated. A splitting compatible with $2$-dominance is preserved if $z_\beta$ belongs to the left of $-\lambda$, as shown in Figure \ref{fig:Rootlocus2}. Furthermore, if $\beta$ is sufficiently close to $\beta^*$, the intersection point of the root locus asymptotes belongs to right-half plane. This guarantees that the origin becomes unstable for large $k$. We can draw the following conclusions:
	\begin{itemize}
		\item The system can be $0$-dominant for $k$ sufficiently small.
		\item For $\beta^* < \beta < 1$ and {for a suitable subset of $0 < \beta < \beta^*$}, the mixed feedback system can be $2$-dominant. Nonlinear behaviors like multi-stability and oscillations may emerge when $k$ is sufficiently large.
	\end{itemize} 
		
	{For the \emph{fast negative/slow positive feedback} case, $\tau_p > \tau_n$,} 
	the generic root locus plots {for strong, $\beta^* < \beta < 1$, and weak, $0 < \beta<\beta^*$, positive feedback are shown in Figures \ref{fig:Rootlocus4} and \ref{fig:Rootlocus3}}, respectively. $0$- and $2$- dominance may hold for small $k$, however, under such time scale arrangement, $2$-dominance does not hold for large $k$. This is due to the fact that $p_n$ moves to the left for increasing $k$. {No selection of the rate $\lambda$ guarantees a splitting with two right poles}.
	
	Interestingly, the configuration of {the root loci of the closed-loop system are compatible with $1$-dominance. This means that} the system may exhibit multiple equilibria (but no oscillations).	
	We can draw the following conclusions:
	\begin{itemize}
		\item The system can be $0$-dominant for $k$ sufficiently small.
		\item For both weak and strong positive feedback, the system can be $1$-dominant {for selected $-\lambda$ between $p_p$ and $p_n$}. In this case, no oscillation can take place. Multiple equilibria may appear for $k$ sufficiently large.
	\end{itemize}
	
	Since we are interested in 2-dominance and oscillations, in what follows we will focus on the mixed feedback controller with fast positive feedback and slow negative feedback, leaving aside the case of slow positive feedback and fast negative feedback. The latter is not compatible with $2$-dominance.

	\section{$2$-dominant mixed feedback \\ design for oscillations}
	\label{Section:Sufficiency}
	
	\subsection{Control design}\label{subsec:frequency_domain_analysis}
	{The previous section provides insights into the time-scale structure
	of mixed feedback. The arrangement into fast positive and slow negative feedback 
	is compatible with $2$-dominance, thus can be leveraged to achieve oscillations. 
	In this section we develop a design procedure to find parameter ranges for gain and balance, $k$ and $\beta$,
	that \emph{guarantee} endogenous oscillations for the closed-loop system (i.e. the existence of a stable limit cycle).

	The design procedure has two steps. The first identifies gain and balance ranges 
	that guarantee $2$-dominance. This is achieved by leveraging Nyquist diagrams and the circle criterion 
	for $2$-dominance (Theorem \ref{th:circle_cirteria})}, as summarized by Theorem \ref{th:0-dominance} and \ref{th:2-dominance}. 
	 {
	The second step  combines the $2$-dominance property with a 
	detailed analysis of the stability of the closed-loop equilibria. This allows us to use
	Theorem \ref{th:p-attractor} to certify oscillations. }

	\begin{theorem}\label{th:0-dominance}
		For any constant reference $r$ and any $\beta\in[0,1]$, the mixed feedback system in Figure \ref{fig:Block} is 0-dominant with rate $\lambda=0$ 
		for any gain $0\leq k<k_0$, where
		\begin{equation} \label{eq:k0}
			k_0 = 
			\begin{cases}
				\infty \! & \! \mbox{if } \min\nolimits\limits_{\omega} \Re(G(j \omega ,\!1,\!\beta)) \!\geq\! 0 \\
				- \dfrac{1}{\min\nolimits\limits_\omega \Re(G(j\omega ,\!1,\!\beta))} \!&\!  \mbox{otherwise. } 	
			\end{cases} \vspace{-6mm}
		\end{equation}$\hfill\lrcorner$
		\vspace{2mm}
	\end{theorem}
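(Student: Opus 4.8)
The plan is to apply the dominance circle criterion (Theorem~\ref{th:circle_cirteria}) to the Lure representation of the closed loop, with linear block $G(s,k,\beta)$ from \eqref{eq:linear_tf}, sector bound $K=1$ (since $0\leq\partial\varphi\leq1$), degree $p=0$, and rate $\lambda=0$. The preliminary observation that makes the whole argument work is that the mixed feedback controller scales linearly in the gain: from \eqref{eq:mixed_feedback_controller} one has $\mathcal{C}(s,k,\beta)=k\,\mathcal{C}(s,1,\beta)$, hence $G(s,k,\beta)=k\,G(s,1,\beta)$ and, in particular, $\Re\!\big(G(j\omega,k,\beta)\big)=k\,\Re\!\big(G(j\omega,1,\beta)\big)$ for every $\omega$.

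First I would dispatch conditions (1)--(2) of Theorem~\ref{th:circle_cirteria}. The poles of $G(s,k,\beta)$ are those of $\mathcal{P}(s)$, which lie in the open left half-plane by asymptotic stability, together with $-1/\tau_p$ and $-1/\tau_n$, which are negative. Therefore no pole has real part $-\lambda=0$ (condition~(1)) and the shifted transfer function $G(s,k,\beta)$ with $\lambda=0$ has zero unstable poles, matching $p=0$ (condition~(2)). Both hold for every $k\geq0$ and $\beta\in[0,1]$.

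It then remains to verify condition~(3), namely $\Re\!\big(G(j\omega,k,\beta)\big)>-1/K=-1$ for all $\omega$. Since $\mathcal{P}(s)$ is strictly proper and $\mathcal{C}(s,1,\beta)$ is proper, $G(s,1,\beta)$ is strictly proper with no poles on the imaginary axis, so $\omega\mapsto\Re\!\big(G(j\omega,1,\beta)\big)$ is continuous and vanishes as $\omega\to\infty$; hence its infimum $m_\beta:=\min_\omega\Re\!\big(G(j\omega,1,\beta)\big)\leq 0$ is attained whenever it is negative, and the ``$\min$'' in \eqref{eq:k0} is well defined. If $m_\beta\geq 0$, then $\Re\!\big(G(j\omega,k,\beta)\big)=k\,\Re\!\big(G(j\omega,1,\beta)\big)\geq 0>-1$ for every $k\geq0$, so (3) holds for all $k$ and $k_0=\infty$. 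If $m_\beta<0$, then $\min_\omega\Re\!\big(G(j\omega,k,\beta)\big)=k\,m_\beta$, which exceeds $-1$ exactly when $k<-1/m_\beta=k_0$. In either case (3) holds for $0\leq k<k_0$, and Theorem~\ref{th:circle_cirteria} yields strict $0$-dominance with rate $\lambda=0$; by Theorem~\ref{th:p-attractor} every bounded trajectory converges to the unique fixed point determined by the (arbitrary) constant reference $r$.

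This is essentially a direct specialization of the dominance circle criterion, so the only points needing care — and the closest thing to an obstacle — are the gain-scaling identity $G(s,k,\beta)=k\,G(s,1,\beta)$ and the use of strict properness to guarantee that the worst-case real part is attained, which together pin down the threshold $k_0$ in \eqref{eq:k0}; I do not expect any substantive difficulty beyond these bookkeeping steps.
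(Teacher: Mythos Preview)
Your proposal is correct and follows essentially the same approach as the paper: apply the circle criterion for dominance (Theorem~\ref{th:circle_cirteria}) with $p=0$, $\lambda=0$, $K=1$, and exploit the scaling identity $G(s,k,\beta)=k\,G(s,1,\beta)$ to derive the threshold $k_0$. The paper's proof is terser and leaves conditions (1)--(2) and the well-definedness of the minimum implicit, while you spell these out carefully; your added remark via Theorem~\ref{th:p-attractor} is not needed for the statement but is harmless.
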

	
	\begin{proof}
		According to Theorem \ref{th:circle_cirteria}, the mixed feedback closed loop is $0$-dominant if the Nyquist plot of the linear system $G(s,k,\beta)$ lies to the right hand side of the line $-1$ in the complex plane as shown in Figure \ref{fig:circle_criteria_zero}. 
		Note $G(s,k,\beta)=kG(s,1,\beta)$, i.e. $k$ only scales the magnitude of $G(s,k,\beta)$. Hence the condition on the Nyquist plot of $G(s,k,\beta)$ is verified whenever $0 \leq k < k_0$, by construction.
	\end{proof}
	
	\begin{figure}[!h]
		\centering
		\includegraphics[width=0.24\textwidth]{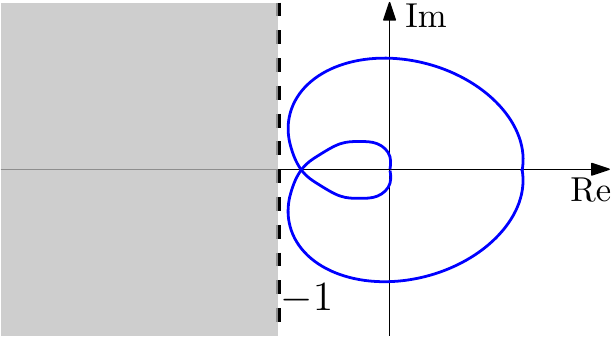}
		\vspace{-1mm}
		\caption{An illustration of the circle criteria for dominance.}
		\vspace{-1mm}
		\label{fig:circle_criteria_zero}
	\end{figure}
	
	\begin{theorem}\label{th:2-dominance}
		Consider a rate $\lambda$ for which the 
		transfer function $G(s-\lambda,1,\beta)$ has two unstable poles. Then, 
		for any constant reference $r$ and any $\beta\in[0,1]$, 
		the mixed feedback system in Figure \ref{fig:Block} is 2-dominant with rate $\lambda$ 
		for any  gain $0 \leq k < k_2 $, where
		\begin{equation}\label{eq:k2} 
			\!\!\!k_2 \!=\! 
			\left\{\begin{array}{ll}
				\!\!\!\!\infty \!\! & \!\! \mbox{if } \!\min\nolimits\limits_{\omega} \Re(G(j \omega\!-\!\lambda ,\!1,\!\beta)) \!\geq\! 0 \\
				\!\!\!- \frac{1}{\min\nolimits\limits_\omega \Re(G(j\omega-\lambda ,\!1,\!\beta))} \!\!&\!\!  \mbox{otherwise. }	
			\end{array}\right. 
		\end{equation}
		\vspace{-5mm}
		$\hfill\lrcorner$
	\end{theorem}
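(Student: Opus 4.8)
The plan is to reproduce the argument behind Theorem~\ref{th:0-dominance}, this time invoking the circle criterion for dominance (Theorem~\ref{th:circle_cirteria}) with degree $p=2$, rate $\lambda$, and sector bound $K=1$, the latter being admissible because the static nonlinearity satisfies $0\leq\partial\varphi\leq1$. I would apply the criterion to the Lure representation of the closed loop in Figure~\ref{fig:Lure_theory}, whose loop transfer function is $G(s,k,\beta)$ from \eqref{eq:linear_tf}. As a preliminary remark, a constant reference $r$ only relocates the closed-loop equilibrium and the operating point of $\varphi$; it leaves the closed-loop Jacobian --- hence the prolonged system and the dominance property --- unchanged, and it does not alter the sector $0\leq\partial\varphi\leq1$. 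So it suffices to verify the three hypotheses of Theorem~\ref{th:circle_cirteria} for $G(s,k,\beta)$ over the range $0\leq k<k_2$.

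Next I would use the key structural identity, exactly as in Theorem~\ref{th:0-dominance}: from \eqref{eq:mixed_feedback_controller} we have $\mathcal{C}(s,k,\beta)=k\,\mathcal{C}(s,1,\beta)$, hence $G(s,k,\beta)=k\,G(s,1,\beta)$, so the gain $k$ only rescales the magnitude of the loop transfer function and leaves its poles untouched. For every $k>0$ this immediately yields hypothesis~(2): the shifted transfer function $G(s-\lambda,k,\beta)=k\,G(s-\lambda,1,\beta)$ has the two unstable poles postulated in the statement. It also yields hypothesis~(1): by the standing time-scale assumption the controller lags $-1/\tau_p$ and $-1/\tau_n$ lie strictly to the right of every plant pole, so any $\lambda$ for which $G(s-\lambda,1,\beta)$ has exactly two unstable poles must place $-\lambda$ in the gap strictly between the two controller poles and the plant poles; in particular no pole of $G(s,k,\beta)$ has real part equal to $-\lambda$. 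For hypothesis~(3) the same identity gives $\Re\big(G(j\omega-\lambda,k,\beta)\big)=k\,\Re\big(G(j\omega-\lambda,1,\beta)\big)$ for all $\omega$; since $G$ is strictly proper and, by hypothesis~(1), has no pole on the line $\Re(s)=-\lambda$, the map $\omega\mapsto\Re(G(j\omega-\lambda,1,\beta))$ is continuous and vanishes at infinity, so its infimum over $\omega$ is attained and equals the minimum appearing in \eqref{eq:k2}. Hence the Nyquist plot of $G(s-\lambda,k,\beta)$ lies strictly to the right of the vertical line through $-1/K=-1$ precisely when $k\,\min_\omega\Re(G(j\omega-\lambda,1,\beta))>-1$, i.e.\ for every $k\geq0$ when this minimum is nonnegative (so $k_2=\infty$) and for $0\leq k<k_2$ otherwise. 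Theorem~\ref{th:circle_cirteria} then delivers strict $2$-dominance with rate $\lambda$.

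The one point I expect to require genuine care is the boundary value $k=0$, where the transfer-function description degenerates to $G(\cdot,0,\beta)\equiv0$ and hypothesis~(2) can no longer be read off from $G$. I would handle it directly on a state-space realization of Figure~\ref{fig:Block}: the entries of the closed-loop Jacobian produced by the feedback path are proportional to $k\,\partial\varphi$, so at $k=0$ that path contributes nothing and the Jacobian becomes block-triangular, with the stable plant modes in one block and the controller-lag modes $-1/\tau_p,-1/\tau_n$ in the other; since $-\lambda$ lies to the left of the latter and to the right of the former, the Jacobian has exactly two eigenvalues to the right of $-\lambda$, and for this constant Jacobian that spectral splitting is exactly $2$-dominance with rate $\lambda$. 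Apart from this degenerate case, the proof is a direct transcription of the proof of Theorem~\ref{th:0-dominance} with the $\lambda$-shifted loop transfer function in place of $G$ itself.
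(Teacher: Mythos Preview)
Your proposal is correct and follows essentially the same approach as the paper: both invoke the circle criterion for dominance (Theorem~\ref{th:circle_cirteria}) on the shifted loop transfer function and exploit the scaling identity $G(s,k,\beta)=k\,G(s,1,\beta)$ to reduce the Nyquist condition to a bound on $k$. Your version is more explicit---you spell out why each hypothesis of Theorem~\ref{th:circle_cirteria} holds, justify why the constant reference $r$ is immaterial, and separately handle the degenerate case $k=0$---whereas the paper simply states that the argument is identical to that of Theorem~\ref{th:0-dominance} with $G(s-\lambda,k,\beta)$ in place of $G(s,k,\beta)$.
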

	\begin{proof}
		The proof argument is similar to the one of Theorem \ref{th:0-dominance}, using the shifted transfer function $G(s-\lambda, k,\beta)$.
	\end{proof}
	
	For any given $\beta$, Theorems \ref{th:0-dominance} and \ref{th:2-dominance} state conditions on the gain $k$ for $0$-dominance and $2$-dominance, respectively. This agrees with the root locus in Figure \ref{fig:Rootlocus1}. Thus there is a range of gains $k$ for which the system is both $0$-dominant and $2$-dominant. In this case, the system behavior will satisfy the most restrictive condition, namely $0$-dominance. The root locus analysis also suggests that $k_2$ is greater than $k_0$, since $2$-dominance is compatible with unstable {equilibria}. As a result, {to explore the oscillatory regime, we will look at the range of gains $k_0 < k < k_2$.}

	When the assumptions of Theorem \ref{th:0-dominance} are satisfied the closed-loop trajectories 
	asymptotically converge to the unique equilibrium that is compatible with the constant reference $r$. 
	In contrast, Theorem \ref{th:2-dominance} guarantees that {the closed-loop system has simple attractors. 
	Thus, in the spirit of the Poincar\'e-Bendixson theorem \cite{Hirsch1974}, oscillations will appear within forward invariant regions that 
	do not contain equilibria. Indeed, in a $2$-dominant system with bounded trajectories, stable oscillations are guaranteed 
	to appear (i.e. a stable limit cycle exists) if all equilibria are unstable.}
	
	{Denote by $u_1$, $y_1$ the input-output pair of $G(s,k,\beta)$ and by $u_2$, $y_2$ the input-output pair of $\varphi(\cdot)$, as  shown in Figure \ref{fig:Fixed_point_Block}. At each equilibrium the closed-loop system satisfies the following equations:}
	
	\begin{equation*}
		\begin{cases}
			y_1=G(0,k,\beta)u_1\\
			y_2=\varphi(u_2)\\
		\end{cases}
		\quad
		\begin{cases}
			u_1=-y_2+r\\
			u_2=y_1 \ ,
		\end{cases}
	\end{equation*}
	{that is,}
	\begin{equation}\label{eq:fixed_point1}
		-\frac{y_1}{G(0,k,\beta)}+r=\varphi(y_1).
	\end{equation} 
	{Using \eqref{eq:linear_tf}, $G(0,k,\beta)=-k(2\beta-1)\mathcal{P}(0)$,
	thus \eqref{eq:fixed_point1} reads}
	\begin{equation}\label{eq:fixed_point2}
		\frac{y_1}{k\mathcal{P}(0)(2\beta-1)}+r=\varphi(y_1).
	\end{equation}
	The slope $\frac{1}{k\mathcal{P}(0)(2\beta-1)}$ determines the number of equilibria. 

		
	For example, consider $\varphi =\tanh$ (the analysis below can be {easily adapted to any monotone, slope-restricted, bounded} static nonlinearity). For $\beta\in[0,0.5]$ we have $\frac{1}{k\mathcal{P}(0)(2\beta-1)}\leq0$, thus there is only one equilibrium. By contrast, when $\beta\in(0.5,1]$, the system may have multiple equilibria as $k\mathcal{P}(0)$ and $r$ varies.
 For $r=0$, there are four possible configurations for different $k\mathcal{P}(0)$ values, as shown in Figure \ref{fig:fixed_point_stability}. The situation for case of $r \neq 0$ can be simply deduced by shifting the straight line vertically by $r$.
	
	Given any balance $\beta$, the slope $\frac{y_1}{k\mathcal{P}(0)(2\beta-1)}$ converges to zero as 
	$k$ increases. Two more equilibria will appear for $\frac{1}{k\mathcal{P}(0)(2\beta-1)}\in(0,1)$, i.e. $k\mathcal{P}(0)>\frac{1}{2\beta-1}$. The stability of the equilibria is characterized
	by the gray area between dashed lines in Figure \ref{fig:fixed_point_stability}, which distinguishes the stable (outside the gray area) and unstable (inside the gray area) linearization of the mixed feedback closed loop. The gray area is derived via Nyquist criterion.
	
	The lack of gray area in Figure \ref{fig:fixed_point_stability}.a indicates that the equilibrium point at $0$ is always stable. This happens when $k$ is small. As $k$ and/or $\beta$ increase, Figures \ref{fig:fixed_point_stability}.b and \ref{fig:fixed_point_stability}.c show that all the equilibria become unstable. For these cases, unstable equilibria combined to boundedness of trajectories and $2$-dominance guarantee that oscillations will occur. A further increase of the gain stabilizes two of the three equilibria, as shown in Figure \ref{fig:fixed_point_stability}.d. In such case, oscillations may disappear in favor of (or coexist with) a bistable behavior. {This occurs for very large positive feedback.}
	
	\begin{figure}[htbp]
		\centering
		\includegraphics[width=1\columnwidth]{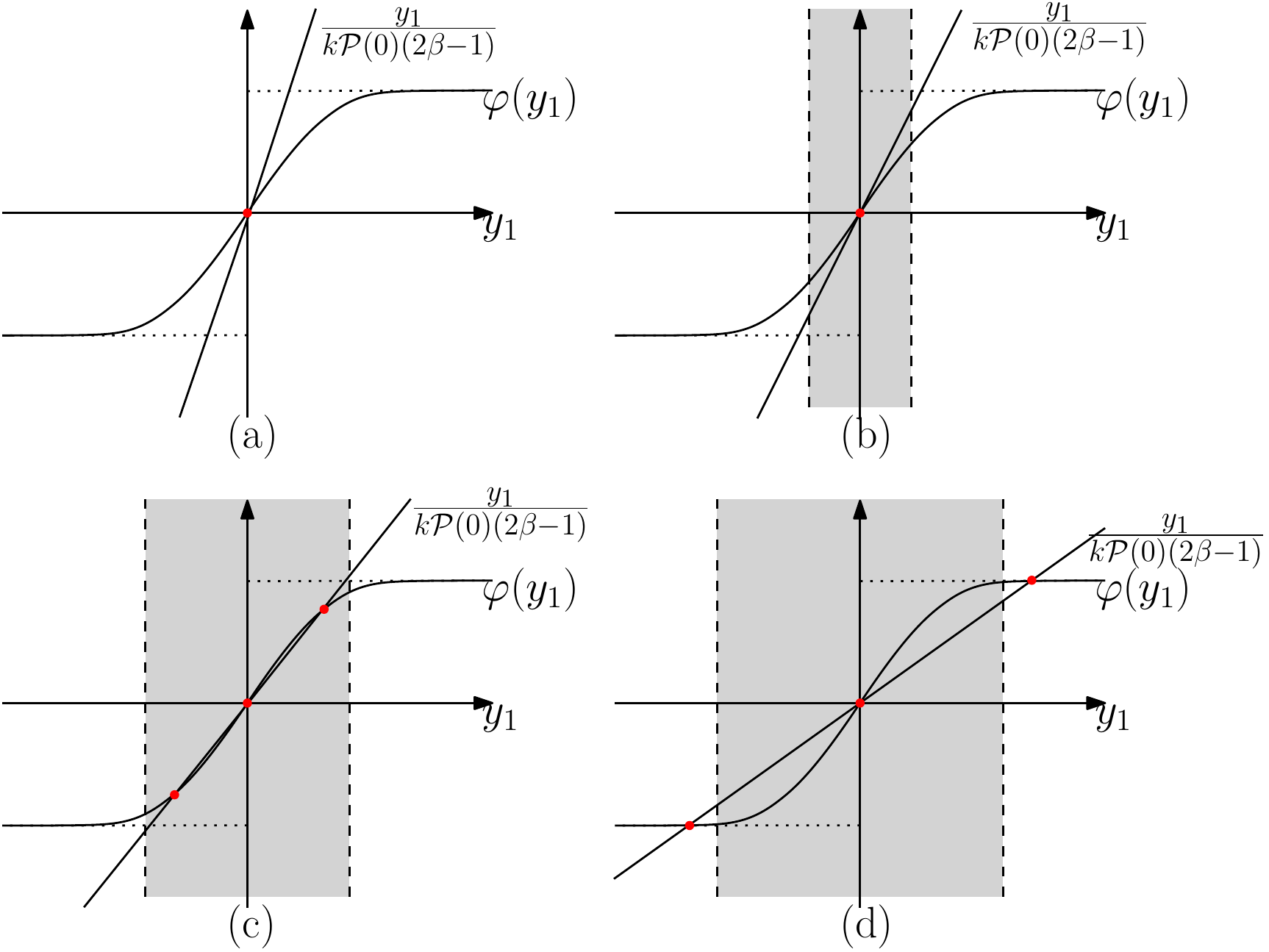}
		\caption{Stability of the equilibria for input reference $r=0$.}
		\label{fig:fixed_point_stability}
	\end{figure}
	
	\subsection{Robust oscillations to plant uncertainties}
	\label{Section:Example1}
	\label{Section:Example1_robustness}
	For illustration, we take $\mathcal{P}(s)$ as a simple first order lag $\mathcal{P}(s)=\frac{1}{\tau_ls+1}$ and numerically compute the ranges of $(k,\beta)$ that lead to different dominant properties and different {stability properties of the closed-loop equilibria}. We take $\beta\in[0,1]$ and $k\in(0.1,1000)$, and we consider different time scale arrangements, reflecting strong and weak time-scale separation: $\tau_l=0.01$, $\tau_p=0.1$, $\tau_n=1$ in Figure \ref{fig:dominance_map}.a, and $\tau_l=0.01$, $\tau_p=0.1$, $\tau_n=0.3$, in Figure \ref{fig:dominance_map}.b. We set $\lambda=50$, roughly in the middle of left most two poles. The input reference $r$ is set to $0$.
	
	\begin{figure}[htbp]
		\centering
		\includegraphics[width=0.9\columnwidth]{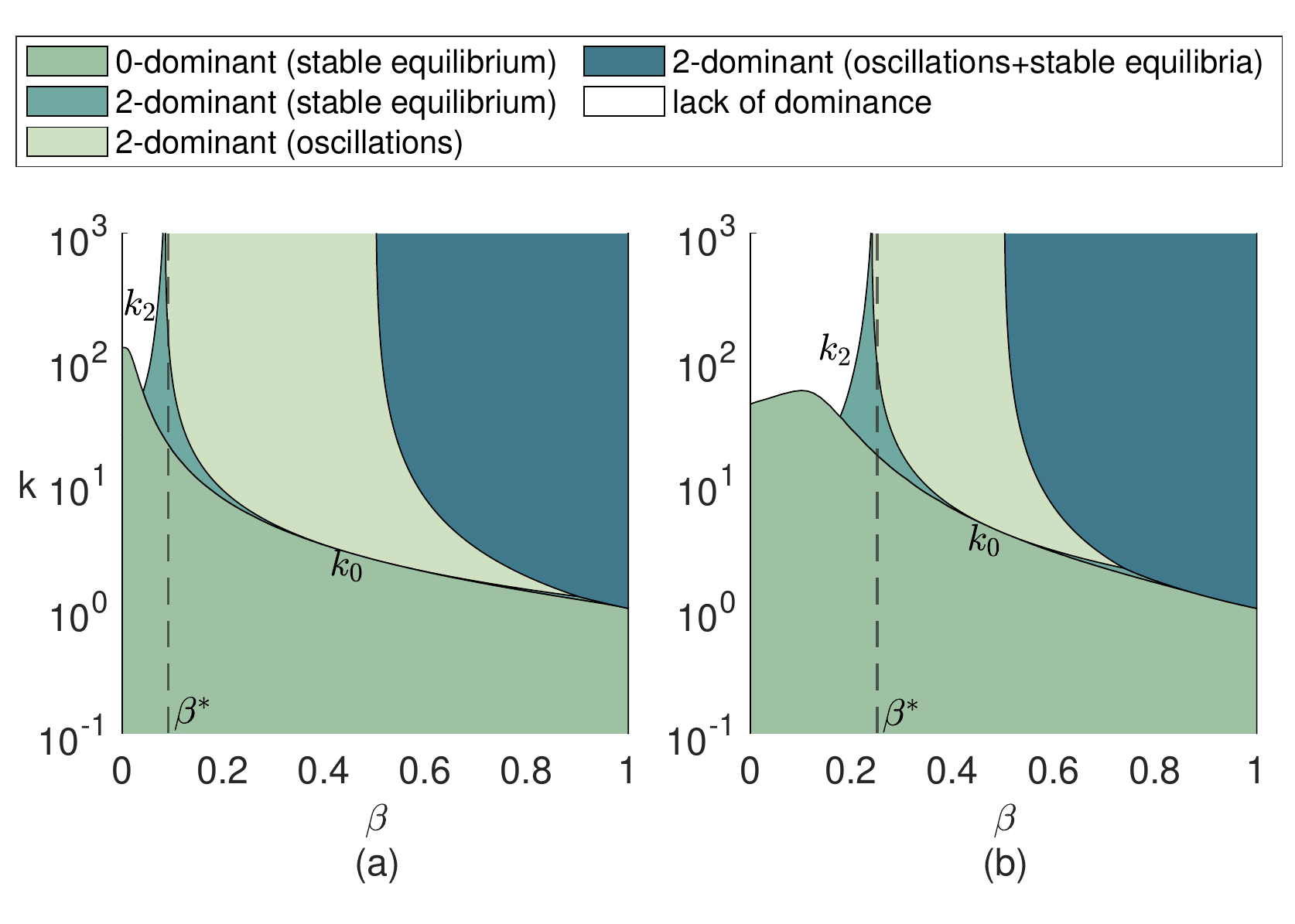} \vspace*{-3mm}
		\caption{{Dominance map for $r=0$ and rate $\lambda=50$. \textbf{(a)}: $\tau_l=0.01$, $\tau_p=0.1$, $\tau_n=1$; \textbf{(b)}: $\tau_l=0.01$, $\tau_p=0.1$, $\tau_n=0.3$.}}
		\label{fig:dominance_map}
	\end{figure}
	
	As shown in Figure \ref{fig:dominance_map}, the parametric $(k,\beta)$ plane is divided into five regions. The lines $k_0$ and $k_2$ mark the upper bound of $0$-dominance and $2$-dominance respectively, as discussed in section \ref{Section:Root_Locus}. The white region above them, denoted as ``lack of dominance'', is where circle criteria for dominance is not satisfied for both $0$-dominance and $2$-dominance. The system is $0$-dominant and globally stable for $k<k_0$ for all $\beta\in[0,1]$. The relevant region of $2$-dominance is where $k_0 < k < k_2$ and it is further divided into three regions as a result of fixed point analysis. Moving horizontally, as $\beta$ increases from $0$ to $1$, the closed loop stable equilibrium loses stability and the system goes into steady oscillations. Eventually, oscillations disappear in favor of a bistable behavior (large positive feedback). In comparison, moving vertically, the effect of increasing $k$ leads to a loss of stability followed by either oscillations or multi-stability, controlled by the value of $\beta$.    
	
	The comparison {between strong and weak time-scale separation cases shows how the region of oscillation shrinks in the weak case. In other words, reducing the separation of time scales makes oscillations less robust. When positive and negative feedback lags have smaller time scale difference, the stabilizing action of the negative feedback is more effective. Hence, achieving oscillations require a stronger positive feedback to occur (larger $\beta$)}.
	
	The closed-loop behavior {associated to different regions in Figure \ref{fig:dominance_map} is illustrated by the  simulations of Figure \ref{fig:simu}}. 
	\begin{figure}[htbp]
		\centering
		\includegraphics[width=0.95\columnwidth]{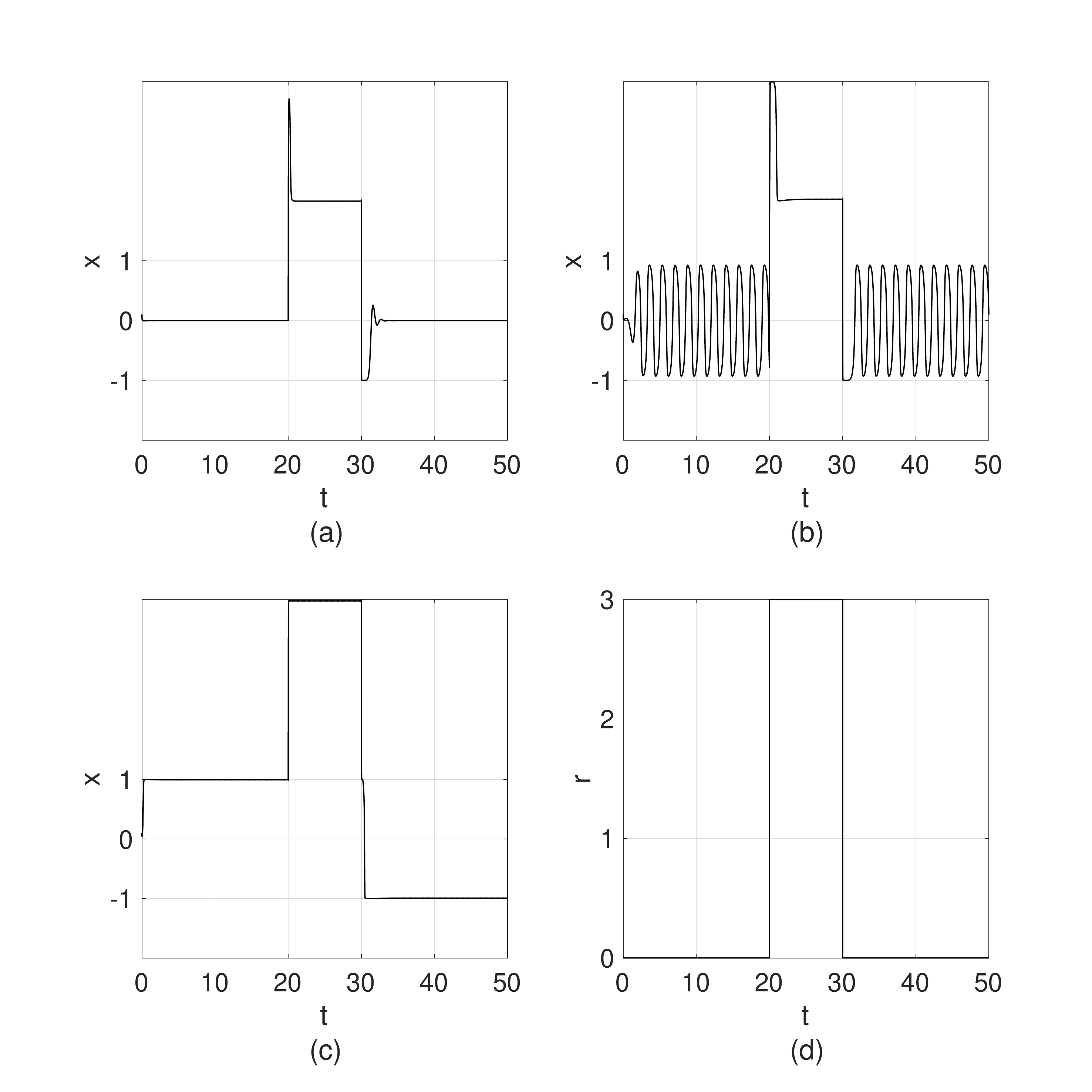} \vspace*{-3mm}
		\caption{Trajectories of the {output $x$ of the mixed feedback closed loop of Figure \ref{fig:dominance_map}.a
		  in response to the reference input $r$ represented in \textbf{(d)}. Initial condition $x=0.1$,  $x_p=0$ and $x_n=0$.}
		  \textbf{(a)}: $k=5$, $\beta=0.2$ ($0$-dominant); \textbf{(b)}: $k=5$, $\beta=0.4$ ($2$-dominant with oscillation); \textbf{(c)}: $k=5$, $\beta=0.8$ ($2$-dominant with both oscillation and equilibria).}
		\label{fig:simu}
	\end{figure}
		
	{In relation with the regions outlined in Figure \ref{fig:dominance_map}.a, 
	the simulations in Figure \ref{fig:simu} refer to control parameters 
	$(k,\beta)$} in the $0$-dominance region (Figure \ref{fig:simu}.a), in the $2$-dominance with oscillation region (Figure \ref{fig:simu}.b), and in the $2$-dominance with oscillation+fixed points region (Figure \ref{fig:simu}.c). {The system is driven by the reference signal $r$ in Figure \ref{fig:simu}.d.}
The state trajectories of the closed loop system in $0$-dominant region and $2$-dominant with oscillation region converge back to the original steady state for $t \geq 30$, as shown in Figure \ref{fig:simu}.a and b. {The bistability of the system in the region of co-existence of 
oscillations and multiple fixed points is well captured by Figure \ref{fig:simu}.c, which shows that the output does not return to the initial steady state for $t\geq 30$.}

	\begin{remark}
		The analysis for $r \neq 0$ is similar. In general, a constant non-zero reference input $r$ will reduce the parameter range for oscillations, since an increase in $|r|$ will stabilize the unstable equilibria by shifting them outside the gray regions in Figure \ref{fig:fixed_point_stability}. $\hfill\lrcorner$
	\end{remark}
	\begin{remark}
		The design of $k$ and $\beta$ for oscillations can be combined with the classical describing function and fast-slow approximation methods to regulate the oscillation frequency, as shown in \cite{che2021shaping}.  	 $\hfill\lrcorner$
	\end{remark}
	
	{The combination of circle criterion for dominance (Theorem \ref{th:circle_cirteria}) 
	and of Nyquist criterion for instability of closed-loop equilibria
	 opens the way to robust analysis (and design). Mimicking classical stability theory, the robustness of the closed-loop oscillations 
	 to plant uncertainties is captured by the maximal perturbation that the Nyquist locus can undertake before 
	 (i)~entering the shaded region in Figure \ref{fig:circle_criteria_zero} (robustness measure for $2$-dominance) and
	 (ii)~changing the number of turns around the $-1$ point of the complex plane, for each equilibrium 
	 (robustness of the instability of each equilibrium). This leads to 
	 quantifiable bounds on the multiplicative/additive plant uncertainties that the closed loop can sustain while preserving 
	 oscillations.} 
	
	Consider a \emph{bounded, fast} (poles lie to the left of the dominance rate $-\lambda$) additive uncertainty, $\mathcal{P}_\Delta(s)=\mathcal{P}(s)+\Delta(s)$. Then, the perturbed transfer function reads
	{\begin{equation}
		G_\Delta(s,k,\beta)=G(s,k,\beta)+\Delta(s)\mathcal{C}(s,k,\beta).
	\end{equation}
	To apply the circle criterion for dominance  we consider the shifted transfer function $G_\Delta(s-\lambda,k,\beta)$. The shifted perturbation on the nominal plant reads $\Delta(s-\lambda)\mathcal{C}(s-\lambda,k,\beta)$, which corresponds to a graphical perturbation on the
	nominal Nyquist locus bounded by 
	$$\delta := \sup\nolimits\limits_{\omega \in[0, \infty)} |\Delta(j\omega-\lambda)\mathcal{C}(j\omega-\lambda,k,\beta)|.$$
	Thus, by continuity, for $\delta$ sufficiently small (i.e. for $|\Delta(j\omega-\lambda)|$ sufficiently small at each frequency $\omega$), the Nyquist plot of $G_\Delta(s-\lambda,k,\beta)$ must remain to the right of the vertical axis passing through $-1$. That is, the perturbed closed-loop system remains $2$-dominant. A similar argument applies to the stability/instability of each equilibrium.}
	
	As an illustration, consider the nominal mixed feedback closed loop system in Figure \ref{fig:dominance_map}.a, with $k=5$ and $\beta=0.4$. Figure \ref{fig:circle_criteria_robustness} shows that the inflated Nyquist plot of {$G(s-\lambda,k,\beta)$} remains to the left of $-1$ for $\delta = 0.95$. The $2$-dominance of the closed loop is thus robust to perturbations $\Delta$ whose poles have real part smaller than $-\lambda$ and that satisfy {$|\Delta(j \omega-\lambda)| \leq \frac{\delta }{|\mathcal{C}(j\omega-\lambda,k,\beta))|}$} for all $\omega \in \mathbb{R}$ (Figure \ref{fig:circle_criteria_robustness} Right).
	
	\begin{figure}[!h]
		\centering
		\includegraphics[width=0.5\textwidth]{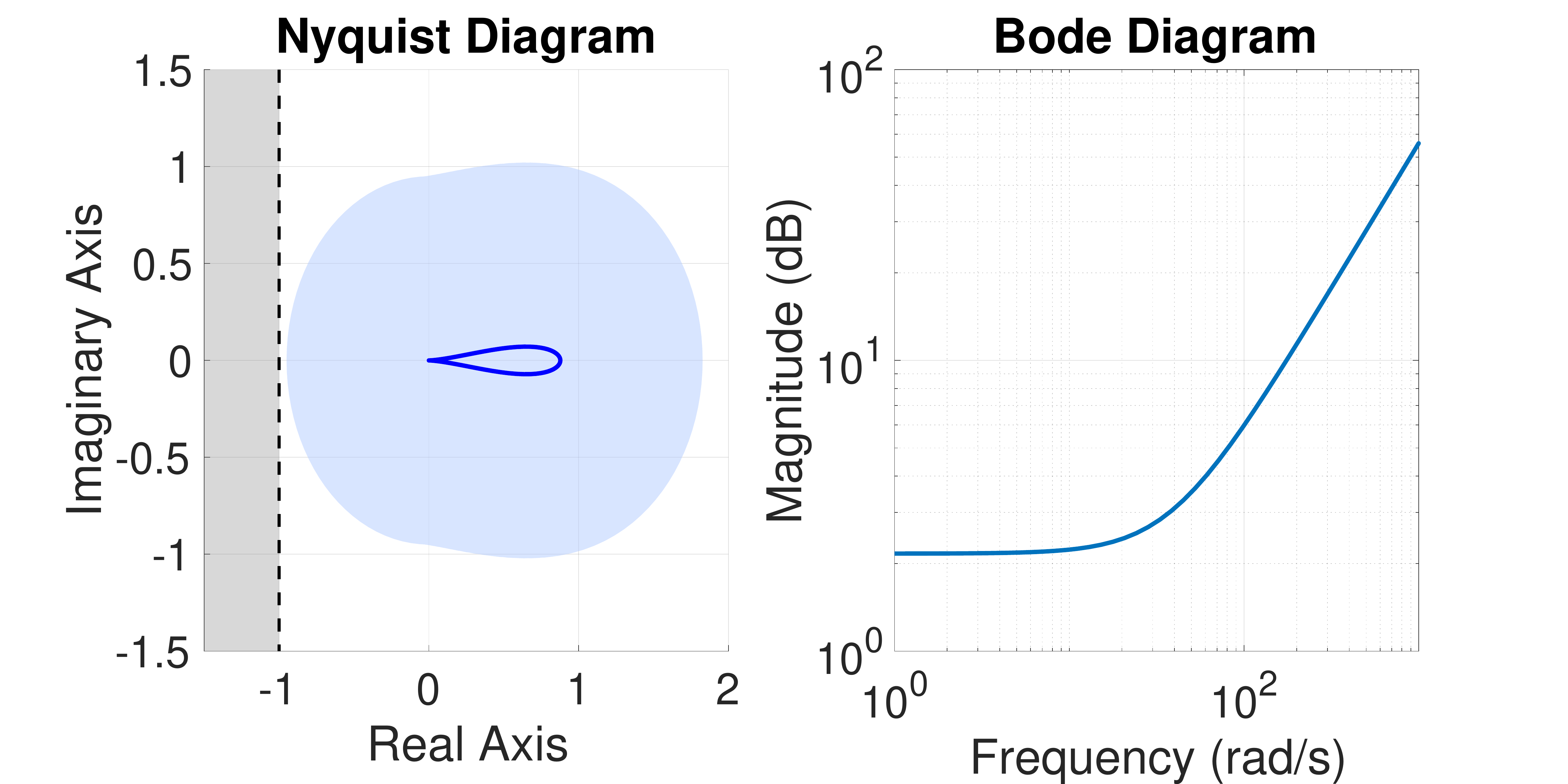}
		\caption{\textbf{Left}: Nyquist plot of $G(s-\lambda)$ for $k = 5$ and $\beta = 0.4$. Recall that $\lambda =50$. The uncertainty bound is $\delta = 0.95$. \textbf{Right}: The upper bound of $|\Delta(j\omega-\lambda)|$ set by {$\frac{\delta }{|\mathcal{C}(j\omega-\lambda,k,\beta))|}$}.}
		\label{fig:circle_criteria_robustness}
	\end{figure}
	
	For the robustness of the oscillations, a further check on the instability of the equilibria of the closed system is needed. The equilibria are now given by \eqref{eq:fixed_point1} computed for $G_\Delta$ instead of $G$. This means that \eqref{eq:fixed_point2} becomes
	\[\frac{y_1}{(k\mathcal{P}(0)+\Delta(0))(2\beta-1)}+r=\varphi(y_1).\] 
	This shows that the perturbation on the position of the equilibria is a function of the DC gain $\Delta(0)$. Their stability/instability can be studied via local analysis, using the Nyquist criterion on the system linearization at each equilibrium. This amount to classical robustness analysis (for stability/instability) and enforces additional constraints on the perturbation bound $\delta$. 
	
	{The analysis of plant perturbations in the example shows} that mixed feedback controller guarantees (quantifiable) robustness of closed-loop oscillations. From a system-theoretic perspective, our analysis supports related observations from system biology and neuroscience, {which identify in the interplay between fast positive feedback and slow negative feedback a source of robustness for oscillators}. 
	
	{
	\begin{remark}
	We have focused on \emph{fast dynamic uncertainties} for reasons of simplicity. These guarantee that the poles of the
	perturbed plant remain to the left of the $-\lambda$ axis. In the context of classical robust stability, this assumption 
	would correspond to the restriction to stable uncertainties. The analysis of robustness to a wider class of uncertainties
	requires a more general version of the the circle criterion for dominance \cite{miranda2018analysis}, and an
	extended use of the Nyquist criterion for the robustness analysis of stable/unstable equilibria. $\hfill\lrcorner$
	\end{remark}
	}
	
	The simplicity of the analysis above, mostly due to the adaptation of classical tools for linear robust stability to robust oscillations, leaves open questions about design. Finding gain $k$ and balance $\beta$ to achieve a prescribed level of robustness remains a challenging problem. Optimizing such parameters using Nyquist analysis is hard and typically requires several iterations. This motivate the development of a more systematic and general framework to support design, based on $p$-dissipativity and LMIs.
		
	\section{$p$-dissipativity and LMIs}
	\label{Section:p-dissipativity}
	\subsection{$p$-dissipativity}
	{As in Section \ref{Sec:Dominance_Th}, we summarize here a minimal set of results 
	on differential dissipativity for open dominant systems. More details can be found in 
	\cite{forni2018differential,miranda2018analysis}. These results will be used 
	to develop control design for robust mixed-feedback oscillators
	based on LMIs. } 
	
	Consider the open nonlinear system of the form:
	\begin{equation}\label{theorem:dissipativity_sys}
		\begin{cases}
			\dot{x}=f(x)+Bu\\
			y=Cx+Du
		\end{cases}\quad x\in\mathbb{R}^n, \ y,u\in\mathbb{R}^m
	\end{equation}
	{where $B \in \mathbb{R}^{n \times m}$, $C \in \mathbb{R}^{m \times n}$, and $D\in \mathbb{R}^{m \times m}$}.
	The prolonged system, derived through linearization, reads
	\begin{equation}\label{theorem:prolonged_dissipative_sys}
		\begin{cases}
			\dot{x}=f(x)+Bu\\
			\delta\dot{x}=\partial f(x)\delta x+B\delta u\\
			y=Cx+Du\\
			\delta y=C\delta x+D\delta u 
		\end{cases}
		{\begin{array}{c}
		(x,\delta x)\in\mathbb{R}^{2n} \, , \\ 
		(u,\delta u),(y,\delta y)\in\mathbb{R}^{2m} \ .
		\end{array}}
	\end{equation}

	\begin{definition}{\cite[Definition 3]{forni2018differential}}\label{de:p-dissipativity}
		The nonlinear system \eqref{theorem:dissipativity_sys} is \emph{differentially $p$-dissipative with rate} $\lambda\geq0$ and differential supply rate:
		\begin{equation}
			\label{eq:s}
			\begin{bmatrix}
				\delta y\\\delta u
			\end{bmatrix}^T\begin{bmatrix}
				Q&L\\L& R
			\end{bmatrix}\begin{bmatrix}
				\delta y\\\delta u
			\end{bmatrix}
		\end{equation} if there exist some symmetric matrix $P$ with inertia $(p,0,n-p)$ and some constant {$\varepsilon\geq0$}, such that the prolonged system \eqref{theorem:prolonged_dissipative_sys} satisfies the conic constraint
		\begin{equation} \label{eq:dissipativity_LMI}
			\begin{bmatrix}
				\delta\dot{x}\\
				\delta x
			\end{bmatrix}^T \begin{bmatrix}
				0&P\\P&2\lambda P+\varepsilon I
			\end{bmatrix}\begin{bmatrix}
				\delta\dot{x}\\
				\delta x
			\end{bmatrix}\leq \begin{bmatrix}
				\delta y\\\delta u
			\end{bmatrix}^T\begin{bmatrix}
				Q&L\\L& R
			\end{bmatrix}\begin{bmatrix}
				\delta y\\\delta u
			\end{bmatrix}
		\end{equation}
		for all $(x,\delta x) \in \mathbb{R}^{2n}$ {and all $(u,\delta u)\in \mathbb{R}^{2m}$. 
		The property is strict if $\varepsilon>0$.}
		$\hfill\lrcorner$
	\end{definition}
	
	\eqref{eq:dissipativity_LMI} {corresponds} to a dissipation inequality 
	of the form $\dot{V}(\delta x) \leq s(\delta y,\delta u)$ for
	$V(\delta x) = \delta x^T P \delta x$ and for $s(\delta y,\delta u)$ given by \eqref{eq:s}, applied to a prolonged system with shifted Jacobian $\partial f(x) + \lambda I$. $p$-dissipativity replaces the usual constraint on the positivity of the storage, i.e. $P > 0$, with a constraint on its inertia.
	
	{In this paper we will make use of two particular supplies. For robustness purposes, we will rely on $p$-dissipativity with respect to} 	the \emph{gain supply}
	\begin{equation} \label{p_gain_LMI}
		\begin{bmatrix}
			\delta y\\\delta u
		\end{bmatrix}^T\begin{bmatrix}
			-I &0\\0 & \gamma^2I
		\end{bmatrix}\begin{bmatrix}
			\delta y\\\delta u
		\end{bmatrix}
	\end{equation}
	where $\gamma$ characterizes the $p$-\emph{gain} of the system. {We will also take advantage of the \emph{passivity supply} for interconnection purposes.}
	\begin{equation}\label{eq:excess/shortage_rate}
		\begin{bmatrix}
			\delta y\\\delta u
		\end{bmatrix}^T\begin{bmatrix}
			-\alpha I & I\\
			I&\mu I
		\end{bmatrix}\begin{bmatrix}
			\delta y\\\delta u
		\end{bmatrix}
	\end{equation}
	where $\alpha>0$ ($\mu<0$) denotes excess of output (input) passivity, and $\alpha<0$ ($\mu>0$) denotes shortage of output (input) $p$-passivity, respectively.
	
	The notion of $p$-gain {combined with the the small gain theorem for dominance \cite{padoan2019h,forni2018differential} provide} a framework for robust control of dominant systems, as in classical robust stability theory.
	\begin{theorem}[Small gain interconnection] \label{th:differential_small_gain}
		Let $\Sigma_i$ be a $p_i$-dominant system with input $u_i$, output $y_i$, and differential $p$-gain $\gamma_i\in\mathbb{R}_+$ with rate $\lambda>0$, for $i\in\{1,2\}$. Then the closed system $\Sigma$ defined by the feedback interconnection
		\[u_1=y_1,\quad u_2=y_2\]
		of $\Sigma_1$ and $\Sigma_2$ is $(p_1+p_2)$-dominant with rate $\lambda$ if $\gamma_1\gamma_2<1$.$\hfill\lrcorner$
	\end{theorem}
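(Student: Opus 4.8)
The plan is to follow the classical small-gain argument, replacing the Lyapunov storage functions by the shifted differential storages of Definition \ref{de:p-dissipativity} and the positivity of the storage matrix by a constraint on its inertia. First I would translate the hypotheses into dissipation inequalities: by Definition \ref{de:p-dissipativity} with the gain supply \eqref{p_gain_LMI}, for each $i\in\{1,2\}$ there exist a symmetric matrix $P_i$ with inertia $(p_i,0,n_i-p_i)$ and a constant $\varepsilon_i\ge0$ such that, along the prolonged dynamics of $\Sigma_i$, the storage $V_i(\delta x_i)=\delta x_i^TP_i\delta x_i$ satisfies $\dot V_i+2\lambda V_i+\varepsilon_i|\delta x_i|^2\le -|\delta y_i|^2+\gamma_i^2|\delta u_i|^2$. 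This is precisely \eqref{eq:dissipativity_LMI} rewritten, using that the left-hand side of \eqref{eq:dissipativity_LMI} equals $\frac{d}{dt}(\delta x_i^TP_i\delta x_i)+2\lambda\,\delta x_i^TP_i\delta x_i+\varepsilon_i|\delta x_i|^2$.

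Next I would assemble the storage for the interconnection. Assuming the feedback interconnection is well posed (so the stacked state $x=(x_1,x_2)$ evolves along a well-defined vector field whose linearization yields the prolonged dynamics of $\Sigma$), set $P=\mathrm{blkdiag}(a_1P_1,a_2P_2)$ with scalars $a_1,a_2>0$ still free. Multiplying a symmetric matrix by a positive scalar leaves its inertia unchanged, and the inertia of a block-diagonal matrix is the sum of the inertias of the blocks; hence $P$ has inertia $(p_1+p_2,\,0,\,(n_1+n_2)-(p_1+p_2))$, as required by Definition \ref{de:p-dominance}. Taking $V=a_1V_1+a_2V_2=\delta x^TP\delta x$ and summing the two component inequalities weighted by $a_1$ and $a_2$ gives $\dot V+2\lambda V+(a_1\varepsilon_1|\delta x_1|^2+a_2\varepsilon_2|\delta x_2|^2)\le a_1(-|\delta y_1|^2+\gamma_1^2|\delta u_1|^2)+a_2(-|\delta y_2|^2+\gamma_2^2|\delta u_2|^2)$.

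Then comes the small-gain step. The feedback interconnection enforces $|\delta u_1|=|\delta y_2|$ and $|\delta u_2|=|\delta y_1|$ (signs are irrelevant since the supply is quadratic), so the right-hand side above equals $(a_2\gamma_2^2-a_1)|\delta y_1|^2+(a_1\gamma_1^2-a_2)|\delta y_2|^2$. Since $\gamma_1\gamma_2<1$ implies $\gamma_1^2<1/\gamma_2^2$, the interval $(\gamma_1^2,1/\gamma_2^2)$ is nonempty; choosing $a_1=1$ and $a_2$ in this interval makes both coefficients $\le 0$, so the supply term is nonpositive. We are thus left with $\dot V+2\lambda V\le -(a_1\varepsilon_1|\delta x_1|^2+a_2\varepsilon_2|\delta x_2|^2)\le -\varepsilon|\delta x|^2$ with $\varepsilon:=\min(a_1\varepsilon_1,a_2\varepsilon_2)\ge0$, using $|\delta x|^2=|\delta x_1|^2+|\delta x_2|^2$. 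Rewriting this in the quadratic-form notation of \eqref{eq:dominance_LMI} shows that the prolonged system of $\Sigma$ satisfies the conic constraint of Definition \ref{de:p-dominance} with storage $P$, rate $\lambda$ and constant $\varepsilon$; since each component inequality holds for all $(x_i,\delta x_i)$, the resulting inequality holds for all $(x,\delta x)$, so $\Sigma$ is $(p_1+p_2)$-dominant with rate $\lambda$.

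The passages between \eqref{eq:dissipativity_LMI} and its dissipation-inequality form, and the weighted summation, are routine. The step that needs care — the main \emph{obstacle} — is the inertia bookkeeping: one must verify rigorously that positively weighted block concatenation of the $P_i$ produces exactly inertia $(p_1+p_2,0,\cdot)$, and that the closed-loop prolonged system is genuinely the linearization of a well-posed interconnection (this is where a standing assumption such as invertibility of $I-D_1D_2$ enters). Modulo these checks, the proof is the exact dominance analogue of the textbook dissipativity-based small-gain theorem.
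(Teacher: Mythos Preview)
The paper does not supply its own proof of this theorem; it is stated as a quoted result from \cite{padoan2019h,forni2018differential} and used as a tool in later sections. Your argument is correct and is exactly the standard proof one finds in those references: form the weighted block-diagonal storage $P=\mathrm{blkdiag}(a_1P_1,a_2P_2)$, use additivity of inertia under block concatenation to obtain the $(p_1+p_2,0,\cdot)$ inertia, substitute the interconnection constraints into the summed gain supplies, and exploit $\gamma_1\gamma_2<1$ to choose $a_2\in(\gamma_1^2,1/\gamma_2^2)$ so that the residual supply is nonpositive. The two caveats you flag---inertia bookkeeping and well-posedness of the loop---are the right ones and are handled routinely; there is nothing to correct.
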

	
	Like classical passivity, $p$-passivity {is preserved by negative feedback}, as clarified by the next theorem (\cite[Theorem 4]{forni2018differential}).
	
	\begin{theorem}\label{th:Interconection_theorem_general_passive_Ver}
		Let $\Sigma_i$ be a $p_i$-passive {system} from input $u_i$ to output $y_i$ with dominant rate $\lambda\geq0$ and supply rate
		\begin{equation}\label{eq:general_passivity_LMI}
			\begin{bmatrix}
				-\alpha_i I & I\\
				I&\mu_i I
			\end{bmatrix}\begin{bmatrix}
				\delta y_i\\\delta u_i
			\end{bmatrix} 
		\end{equation}
		for $i\in\{1,2\}$.
		Then the closed loop system defined by the negative feedback interconnection
		\[u_1=-y_2,\quad u_2=y_1\]
		is $p_1+p_2$-dominant if $\alpha_1-\mu_2\geq0$ and $\alpha_2-\mu_1\geq0$. $\hfill\lrcorner$
	\end{theorem}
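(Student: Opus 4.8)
The plan is to reduce the statement to an algebraic verification of the conic constraint \eqref{eq:dissipativity_LMI} for the closed-loop prolonged system, using the $p$-dissipativity certificates of $\Sigma_1$ and $\Sigma_2$ together with the feedback constraint. First I would write down the prolonged dynamics of the interconnection: with state $(x_1,x_2)$ and tangent state $(\delta x_1,\delta x_2)$, the feedback equations $u_1=-y_2$, $u_2=y_1$ propagate to the variational level as $\delta u_1=-\delta y_2$, $\delta u_2=\delta y_1$. By hypothesis each $\Sigma_i$ admits a symmetric $P_i$ of inertia $(p_i,0,n_i-p_i)$ and $\varepsilon_i\ge 0$ such that the prolonged system of $\Sigma_i$ satisfies $\dot V_i(\delta x_i)+2\lambda V_i(\delta x_i)+\varepsilon_i|\delta x_i|^2 \le s_i(\delta y_i,\delta u_i)$ with $V_i(\delta x_i)=\delta x_i^T P_i\delta x_i$ and $s_i$ the supply \eqref{eq:general_passivity_LMI}. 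The natural candidate storage for the closed loop is the block-diagonal $P=\mathrm{diag}(P_1,P_2)$, which has inertia $(p_1+p_2,0,(n_1+n_2)-(p_1+p_2))$ — exactly the inertia required for $(p_1+p_2)$-dominance.

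The core computation is to add the two dissipation inequalities and show that the combined supply is nonpositive under the feedback constraint, so that the closed loop satisfies the conic constraint with rate $\lambda$ (and $\varepsilon=\varepsilon_1+\varepsilon_2$, or a possibly smaller positive constant if strictness is needed). Summing gives
\[
\dot V_1+\dot V_2 + 2\lambda(V_1+V_2) + (\varepsilon_1|\delta x_1|^2+\varepsilon_2|\delta x_2|^2) \le s_1(\delta y_1,\delta u_1)+s_2(\delta y_2,\delta u_2).
\]
Substituting $\delta u_1=-\delta y_2$ and $\delta u_2=\delta y_1$ into the right-hand side and expanding the quadratic forms, the cross terms $\delta y_1^T\delta u_1 = -\delta y_1^T\delta y_2$ and $\delta y_2^T\delta u_2 = \delta y_2^T\delta y_1$ cancel, and what remains is
\[
-(\alpha_1-\mu_2)|\delta y_1|^2 - (\alpha_2-\mu_1)|\delta y_2|^2 \le 0,
\]
which is exactly the content of the hypotheses $\alpha_1-\mu_2\ge 0$ and $\alpha_2-\mu_1\ge 0$. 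Hence with $V=V_1+V_2=\delta x^T P\delta x$ we obtain $\dot V + 2\lambda V + \varepsilon|\delta x|^2\le 0$, which is the matrix inequality form of \eqref{eq:dissipativity_LMI} with supply zero, i.e. $p$-dominance with $p=p_1+p_2$.

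The only real subtlety — the step I expect to be the main obstacle — is the bookkeeping on the constant $\varepsilon$: Definition~\ref{de:p-dominance} for the conclusion requires a single $\varepsilon\ge 0$ (or $>0$ for strictness) multiplying $I$ on the joint state $(\delta x_1,\delta x_2)$, whereas the summed inequality naturally produces a block-diagonal term $\mathrm{diag}(\varepsilon_1 I,\varepsilon_2 I)$. This is resolved by taking $\varepsilon=\min(\varepsilon_1,\varepsilon_2)$, since $\mathrm{diag}(\varepsilon_1 I,\varepsilon_2 I)\ge \varepsilon I$; strictness of the conclusion then follows whenever at least enough of the component properties are strict and the supply inequalities $\alpha_i-\mu_j\ge 0$ leave the requisite slack. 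A second point worth stating carefully is that the inertia of $P=\mathrm{diag}(P_1,P_2)$ adds, $(p_1,0,n_1-p_1)+(p_2,0,n_2-p_2)=(p_1+p_2,0,(n_1+n_2)-(p_1+p_2))$, which is immediate for block-diagonal symmetric matrices but is what makes $p_1+p_2$ — rather than, say, $\max(p_1,p_2)$ — the correct degree for the interconnection. Everything else is the routine quadratic-form manipulation sketched above, in complete analogy with the classical passivity theorem, the single departure being that $P$ is sign-indefinite rather than positive definite.
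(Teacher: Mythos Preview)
Your argument is correct and is exactly the standard proof: block-diagonal storage $P=\mathrm{diag}(P_1,P_2)$, sum of the two dissipation inequalities, cancellation of the cross terms under $\delta u_1=-\delta y_2$, $\delta u_2=\delta y_1$, leaving $-(\alpha_1-\mu_2)|\delta y_1|^2-(\alpha_2-\mu_1)|\delta y_2|^2\le 0$. Your handling of the inertia (additive for block-diagonal $P$) and of $\varepsilon=\min(\varepsilon_1,\varepsilon_2)$ is the right bookkeeping.

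There is no comparison to make with the paper's own proof, however, because the paper does not prove this theorem: it is quoted verbatim as \cite[Theorem~4]{forni2018differential} and stated without proof in Section~\ref{Section:p-dissipativity}. Your write-up is essentially the argument one would expect to find in that reference, so nothing is missing on your side; just be aware that in this paper the result functions as an imported tool rather than something established here.
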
 
	
	Theorem \ref{th:Interconection_theorem_general_passive_Ver} suggests that the shortage of input (output) passivity of one subsystem can be compensated by the excess of output (input) passivity of the other system. Furthermore, as in classical passivity, for $\alpha_i = 0$ and $\mu_i = 0$, the closed loop given by $u_1 = - y_2 + v_1$ and $u_2 = y_1 + v_2$ is also $(p_1+p_2)$-passive from $v = (v_1,v_2)$ to $y = (y_1,y_2)$.
	
	\subsection{Convex relaxation for LMI design}
	\label{sec:convex_relaxation_LMI}
	Conditions \eqref{eq:dominance_LMI} and \eqref{eq:dissipativity_LMI} result in a family of infinite LMIs. Their solutions can be obtained via convex relaxation, {by confining the system Jacobian within the convex hull of a finite set of linear matrices 
	$\mathcal{A}:= \{A_1, \dots , A_N\}$ for $\partial f(x)$, \cite{lmibook}. Namely, for all $x$, we need that} $\partial f(x) = \sum_{i=1}^N \rho_i(x) A_i$ for some $\rho_i(x)$ satisfying $\sum_i \rho_i(x) = 1$ \cite[Section VI.B]{forni2018differential}. 
	
	{Recall that} condition \eqref{eq:dominance_LMI} is equivalent to
	\begin{equation}\label{eq:convex_relaxation}
		\partial f(x)^TP+P\partial f(x)+2\lambda P+\varepsilon I \leq 0, \quad \forall x\in\mathbb{R}^n.
	\end{equation} 
	{Thus, if} $\partial f(x)\in \text{ConvexHull}(\mathcal{A})$ for all $x$, then any (uniform) solution $P$ to
	\begin{equation}\label{eq:finite_set_matrices}
		A_i^TP+PA_i+2\lambda P+\varepsilon I \leq 0,\quad i\in\{1,...,N\}
	\end{equation}
	is also a solution to LMI \eqref{eq:convex_relaxation}. 
	
	Likewise, for the supply rate \eqref{p_gain_LMI} and \eqref{eq:excess/shortage_rate}, solutions to \eqref{eq:dissipativity_LMI} can be obtained by finding a solution $P$ to 
	\begin{equation}
		\label{eq:p_gain_LMI_design}
		\begin{bmatrix}
			A_i^TP+PA_i+2\lambda P +\varepsilon I & PB &C^T\\
			B^TP & -\gamma I & D^T\\
			C & D & -\gamma I
		\end{bmatrix}\leq 0
	\end{equation}
	and 
	\begin{equation} \label{eq:LMI_excess/shortage_passivity}
		\begin{bmatrix}
			A_i^TP+PA_i+2\lambda P +\varepsilon I & PB-C^T&C^T\\
			B^TP-C & -\mu I & D^T\\
			C & D & -\frac{1}{\alpha}I
		\end{bmatrix}\leq 0
	\end{equation} respectively, {where $i\in\{1,...,N\}$.}
		
	These inequalities correspond to classical gain and passivity inequalities \cite{lmibook}. The main difference for dominant systems is that $P$ is not necessarily positive definite but satisfies a constraint on its inertia. {This constraint cannot be enforced explicitly while remaining within the language of LMIs. Fortunately,} if the rate lambda splits the eigenvalues of each matrix $A_i$ into a group of $p$ eigenvalues to the right of $-\lambda$ and $n-p$ eigenvalues to left of $-\lambda$, then any solution $P$ will have inertia $(p,0,n-p)$.
	{
	\begin{remark}
		Finding a tight convex hull relaxation is not a trivial task, in general.
	However, for the mixed feedback controller, the presence of a single 
	nonlinearity $\varphi$ with bounded slope $0 \leq \partial \varphi \leq 1$ 
	strongly simplifies the problem. A tight convex hull
	is given by a family of two matrices, corresponding
	to the system's Jacobian associated to $\partial \varphi = 0$ and $\partial \varphi = 1$. $\hfill\lrcorner$
	\end{remark}	
	}
	
	\section{{LMI design} for closed-loop oscillations}
	\label{Sec:State_Feedback_Design}
	\subsection{State feedback design}	
	In this section we adapt the LMIs \eqref{eq:finite_set_matrices}, \eqref{eq:p_gain_LMI_design}, and \eqref{eq:LMI_excess/shortage_passivity} for control purposes, with the goal of finding a state-feedback that guarantees oscillations.
	We first focus on state-feedback design for $2$-dominance, to guarantee a landscape of simple nonlinear attractors. Then, we provide additional conditions to ``destabilize" the system equilibrium (at least one), to enforce multistability and oscillations.
	
	We consider the {generalized mixed feedback closed loop represented} in Figure \ref{fig:state_feedback_block}. As in Section \ref{Section:Root_Locus}, we assume that the dynamics of the plant $\mathcal{P}$ is faster than the dynamics of the mixed feedback controller. We replace gain and balance parameters of Sections \ref{Section:Root_Locus} and \ref{Section:Sufficiency} with generic feedback gains on the components of the mixed feedback controller $x_p$ and $x_n$, and we extend the control action with a full state feedback from the plant state $x_0\in \mathbb{R}^n$. The system has the state space representation:
	\begin{equation} \label{eq:LMI_SS_general}
		\begin{cases}
			\dot{x}=Ax+Bu\\
			y=Cx\\
			u=\varphi(Kx)
		\end{cases} 
		\quad x=\begin{bmatrix} x_0 \\  x_p \\ x_n \end{bmatrix} \in\mathbb{R}^{n+2}.
	\end{equation}
	\begin{figure}[htbp]
		\centering
		\includegraphics[width=0.3\textwidth]{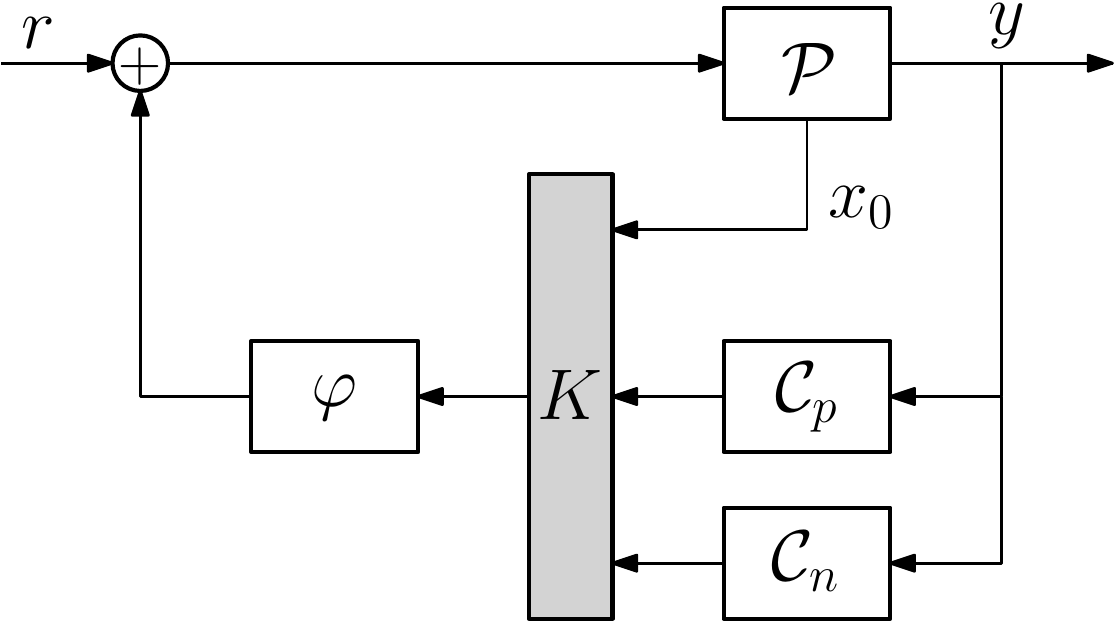}
		\caption{Block diagram of the {generalized mixed feedback closed loop}.}
		\label{fig:state_feedback_block}
	\end{figure}
	
	\begin{theorem} \label{th:state_feedback_design}
		The state-feedback matrix $K$ guarantees $2$-dominance in closed loop with rate $\lambda$ if there exist a symmetric matrix $Y$ with inertia $(2,0,n)$, a matrix $Z$, and $\varepsilon > 0$ such that
		\begin{equation}\label{eq:LMI_2_dominant}
			\begin{cases}
				YA^T + AY+2\lambda Y + \varepsilon I\leq0\\
				YA^T + Z^TB^T + AY + BZ +2\lambda Y +\varepsilon I\leq0.
			\end{cases}	
		\end{equation}
		The state feedback gain $K$ is given by $K=ZY^{-1}$.$\hfill\lrcorner$
	\end{theorem}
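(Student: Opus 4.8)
The plan is to view \eqref{eq:LMI_SS_general} as an autonomous closed loop and apply the convex-relaxation machinery of Section~\ref{sec:convex_relaxation_LMI}, then remove the bilinear term $PBK$ by the classical congruence $P = Y^{-1}$, $Z = KY$ used for state-feedback LMIs.

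First I would identify the closed-loop vector field $f(x) = Ax + B\varphi(Kx)$ and differentiate it: $\partial f(x) = A + \partial\varphi(Kx)\,BK$. Since $\varphi$ is a scalar sigmoid with $0 \le \partial\varphi \le 1$, the scalar $\theta(x) := \partial\varphi(Kx)$ ranges in $[0,1]$, so $\partial f(x) = (1-\theta(x))A + \theta(x)(A+BK)$ lies, for every $x$, in the segment $\mathrm{ConvexHull}(\{A,\,A+BK\})$ — which is precisely the two-matrix tight hull singled out in the Remark closing Section~\ref{sec:convex_relaxation_LMI}. By the argument around \eqref{eq:convex_relaxation}--\eqref{eq:finite_set_matrices}, it then suffices to produce a symmetric $P$ with inertia $(2,0,n)$ and $\varepsilon > 0$ satisfying the two vertex inequalities
\[
A^TP + PA + 2\lambda P + \varepsilon I \le 0, \qquad (A+BK)^TP + P(A+BK) + 2\lambda P + \varepsilon I \le 0,
\]
because these imply \eqref{eq:convex_relaxation} (hence \eqref{eq:dominance_LMI} along the prolonged system) for all $x$, and thus $2$-dominance of \eqref{eq:LMI_SS_general} with rate $\lambda$ by Definition~\ref{de:p-dominance}.

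Next I would carry out the change of variables. A symmetric $P$ with inertia $(2,0,n)$ is invertible, so set $Y := P^{-1}$; by Sylvester's law of inertia $Y$ has the same inertia $(2,0,n)$. Passing through strict inequalities (the clause ``$\exists\,\varepsilon>0:\ M \le -\varepsilon I$'' is just a way of writing $M<0$, and one may rescale the slack afterwards), a sign-preserving congruence of the first vertex inequality by the symmetric invertible matrix $Y$, using $PY = YP = I$, yields $YA^T + AY + 2\lambda Y < 0$; the same congruence of the second yields $Y(A+BK)^T + (A+BK)Y + 2\lambda Y < 0$, and substituting $Z := KY$ (so that $K = ZY^{-1}$) turns it into $YA^T + Z^TB^T + AY + BZ + 2\lambda Y < 0$. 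Choosing a common $\varepsilon>0$ small enough gives exactly \eqref{eq:LMI_2_dominant}. Reading the congruence backwards shows conversely that any feasible $(Y,Z,\varepsilon)$ with $Y$ of inertia $(2,0,n)$ yields $P := Y^{-1}$ (same inertia) and $K := ZY^{-1}$ satisfying the two vertex inequalities, which closes the equivalence and completes the proof.

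The expansion of the congruence and the juggling of slacks are mechanical. The point that needs care, and which I expect to be the conceptual crux, is the inertia bookkeeping: imposing inertia $(2,0,n)$ on $Y$ is what forces, via $P = Y^{-1}$, the inertia constraint of Definition~\ref{de:p-dominance}. This constraint cannot be expressed as an LMI and must be carried as a side condition (exactly as noted before \eqref{eq:finite_set_matrices}); it is also what makes this a \emph{destabilizing} design rather than a classical stabilization LMI, since the two prescribed negative eigenvalues of $Y$ are the mechanism that keeps two eigenvalues of the closed-loop Jacobian to the right of $-\lambda$, leaving room for oscillations via Theorem~\ref{th:p-attractor}.
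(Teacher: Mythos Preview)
Your proposal is correct and follows essentially the same route as the paper: identify the Jacobian $A+\partial\varphi(Kx)\,BK$, enclose it in $\mathrm{ConvexHull}\{A,\,A+BK\}$ via the slope bound $0\le\partial\varphi\le1$, invoke the convex relaxation \eqref{eq:finite_set_matrices} to reduce to two vertex inequalities in $P$, and then perform the congruence $Y=P^{-1}$, $Z=KY$ to obtain \eqref{eq:LMI_2_dominant}. Your handling of the slack (working with strict inequalities and then picking a common $\varepsilon>0$) and your explicit appeal to Sylvester's law of inertia for the $Y\leftrightarrow P$ correspondence are in fact tidier than the paper's shorthand, but the argument is the same.
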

	\begin{proof}
		With the state feedback gain $K$, the prolonged system of \eqref{eq:LMI_SS_general} is
		\begin{equation*}
			\begin{cases}
				\dot{x}=Ax+B\varphi(Kx)\\
				\delta\dot{x}=(A+B\partial\varphi(Kx)K)\delta x
			\end{cases}
		\end{equation*}
		Since $\partial\varphi\in[0,1]$, the set $\mathcal{A}:=\{A,A+BK\}$ guarantees $(A+B\partial\varphi(Kx)K)\in $ \textit{ConvexHull}$(\mathcal{A})$ for all $x$. By convex relaxation, the system \eqref{eq:LMI_SS_general} is $2$-dominant if there exist a matrix $K$, a symmetric matrix $P$ with inertia $(2,0,n)$, and $\epsilon>0$ such that:
		\begin{equation}\label{eq:theorem7_LMI}
			\begin{cases}
				A^TP + PA+2\lambda P+\epsilon I\leq0\\
				(A+BK)^TP + P(A+BK) +2\lambda P+\epsilon I\leq0\\
			\end{cases}	
		\end{equation}
		Let $Y=P^{-1}$, $Z=KY$ and $\varepsilon = \epsilon Y Y$. Then, by pre- and post-multiplying \eqref{eq:theorem7_LMI} by $Y$, we obtain \eqref{eq:LMI_2_dominant}.
	\end{proof}
	
	Feasibility of \eqref{eq:LMI_2_dominant} follows from Section \ref{Section:Sufficiency}, since the selection of gain and balance corresponds to a particular state feedback $K$ {(specifically, feasibility follows from Theorem
	\ref{th:2-dominance}, as a consequence of Theorem \ref{th:circle_cirteria}).}
	The inertia constraint on $Y$ (as well as on $P$) makes the optimization problem non-convex. However, as in Section \ref{Section:p-dissipativity}.B, there is no need to enforce this constraint explicitly. The first inequality in \eqref{eq:LMI_2_dominant} guarantees that $Y$ has inertia $(2,0,n)$ whenever two eigenvalues of $A$ fall to the right of $-\lambda$. This also implies that the plant dynamics limit the design of the closed loop, {by enforcing a constraint on the time constants $\tau_p$ and $\tau_n$, which} must be sufficiently slow. {This affects the achievable oscillation frequency}. 
	
	\begin{remark}
	\label{rem:precompensator}
	{To recover design flexibility, a pre-compensation feedback could be introduced with the goal of making the open loop dynamics faster. This is illustrated in Figure \ref{fig:Prestabilize_Feedback}. Consider the plant dynamics $\mathcal{P}_0$}
	 \begin{equation}\label{eq:P_0_SS}
		\begin{cases}
			\dot{x}_0=A_0x_0+B_0u_0\\
			y=C_0x_0
		\end{cases}
	\end{equation}
	For any given rate $\lambda$, the pre-stabilizing state-feedback matrix $K_0$ must guarantee that all the poles of $\mathcal{P}$ lies to the left of $-\lambda$. Under controllability assumptions of $(A_0,B_0)$, this is guaranteed by the additional LMI condition
	\begin{equation}
		Y_0A_0^T + Z_0^TB_0^T + A_0Y_0 + B_0Z_0 +2\lambda Y_0 + \varepsilon I\leq0 
	\end{equation}
	in the unknowns $Y_0 \!=\! Y_0^T \!>\! 0$ and $Z_0$. Thus, $K_0\!=\!Z_0Y_0^{-1}$. $\hfill\lrcorner$
	\end{remark}
	
	\begin{figure}[htbp]
		\centering
		\includegraphics[width=0.25\textwidth]{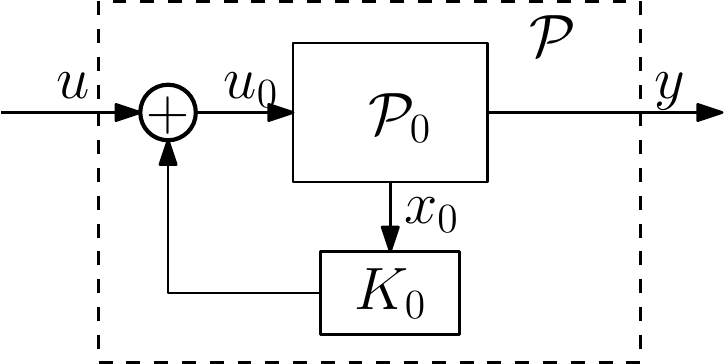}
		\caption{Pre-conditioning feedback.}
		\label{fig:Prestabilize_Feedback}
	\end{figure} 
	
	To induce a stable oscillation in closed loop, we {combine \eqref{eq:LMI_2_dominant} with} the following constraint
	\begin{equation}\label{eq:LMI_unstable_origin}
		YA^T + Z^TB^T + AY + BZ +\varepsilon I\leq 0.
	\end{equation}
	For $r = 0$ ($r \neq 0$ is similar), the constraint on the inertia of $Y$ in Theorem \ref{th:state_feedback_design} combined with \eqref{eq:LMI_unstable_origin} guarantee that the equilibrium point at the origin is unstable. 
	
	In agreement with Section \ref{Section:Sufficiency}, the DC gain of the linear open loop component is $-KA^{-1}B$, that is, the slope of the line in Figure \ref{fig:fixed_point_stability} is now $\frac{1}{-KA^{-1}B}$. This implies that the system will oscillate for 
	``low" gains $K$ and will either oscillate or show multiple equilibria for ``high" gains $K$. Specifically, 
	the closed loop has a single equilibrium if $-KA^{-1}B < 1$, which is unstable by \eqref{eq:LMI_unstable_origin}. This guarantees stable oscillations in closed loop (given the boundedness of the closed-loop trajectories). Multiple equilibria will appear for $-KA^{-1}B>1$, which may lead to a region of co-existence of oscillations and stable fixed points.
	
	To reduce the control gains $|K|$ when $-KA^{-1}B > 1$, the following constraint can be added
	\begin{equation}
		\begin{bmatrix}
			-\nu & Z\\ Z^T& -I
		\end{bmatrix}\leq 0 \ ,
	\end{equation}
	{where the constant $\nu>0$} limits the norm square of matrix $Z$, i.e. by Schur complement $ZZ^T\leq \nu$. 
	Since $K = Z{Y^{-1}}$, if $Y$ does not change dramatically, the parameter $\nu$ effectively control the magnitude of $K$.
	
	\begin{remark}
	{The combination of \eqref{eq:LMI_2_dominant} and \eqref{eq:LMI_unstable_origin}
	leads to the automatic derivation of state-feedback gains for oscillations 
	The design takes advantage of the particular feedback structure in Figure \ref{fig:state_feedback_block},
	which is limited to linear plants for simplicity.
	The approach can be extended to nonlinear plants of the form
	$\dot{x} = f(x) + B u$ by taking advantage of more general convex hull relaxations, as discussed 
	in Section \ref{sec:convex_relaxation_LMI}.} $\hfill\lrcorner$
	\end{remark}
		
	\subsection{Example: mixed state-feedback of a first order plant}
	\label{sec:example2_LMIfirstorder}
	For illustration, we revisit the design of Section \ref{Section:Example1} using LMIs. The linear component has matrices
	\begin{equation}\label{eq:LMI_SS_example1}
		\begin{split}
			&A=\begin{bmatrix}
				-\frac{1}{\tau_l}&0&0\\
				\frac{1}{\tau_p}&-\frac{1}{\tau_p}&0\\
				\frac{1}{\tau_n}&0&-\frac{1}{\tau_n}
			\end{bmatrix},\quad
			B=\begin{bmatrix}
				\frac{1}{\tau_l}\\0\\0
			\end{bmatrix}\\
			&C=\begin{bmatrix}
				1& 0& 0
			\end{bmatrix}
		\end{split}
	\end{equation}
	where $\tau_l=0.01$, $\tau_p=0.1$, $\tau_n=1$. Setting $\lambda=50$ and {solving}
	\eqref{eq:LMI_2_dominant} and \eqref{eq:LMI_unstable_origin} with CVX \cite{cvx}, we get
	\[Y=\begin{bmatrix}
		0.3788&   -0.8923&   -0.2650\\
		-0.8923&   -0.5368&   -0.2545\\
		-0.2650&   -0.2545&   -0.2053
	\end{bmatrix}.\] 
	$Y$ has inertia $(2,0,1)$ and the controller gains read
	\[K=ZY^{-1}=\begin{bmatrix}
		0.5284 &   0.9623  & -0.6342
	\end{bmatrix}.\]
	The DC gain $-KA^{-1}B=0.8565<1$ guarantees a unique unstable equilibrium point and hence stable oscillations for $r=0$ (Figure \ref{fig:LMI_dominance_eg}, left).
	
	The LMI design approach can also be leveraged to handle parametric uncertainties, for example on the time constants of the mixed feedback controller $\mathcal{C}(s,k,\beta)$, as it is often the case in the biological setting. 	
	
	Suppose that $\tau_p$ and $\tau_n$ in \eqref{eq:LMI_SS_example1} are affected by a $20\%$ perturbation, i.e. $\tau_p\in[0.08, 0.12]$, $\tau_n\in[0.8,1.2]$. This variability can be taken into account by extending \eqref{eq:LMI_2_dominant} and \eqref{eq:LMI_unstable_origin} to the convex-hull of matrices given by the four combinations $(\tau_p,\tau_n)\in\{(0.08,0.8),(0.08,1.2),(0.12,0.8),(0.12,1.2)\}$. The solution returned by the CVX is
	\[Y=\begin{bmatrix}
		1.9803&   -3.4094&   -1.1150\\
		-3.4094&   -2.2296&   -1.1118\\
		-1.1150&   -1.1118&   -0.6575
	\end{bmatrix}\]
	which has inertia $(2,0,1)$. The controller gains are
	\[K=ZY^{-1}=\begin{bmatrix}
		0.5296 &    1.3804 &  -1.6173
	\end{bmatrix}.\]
	For nominal values, the DC gain $-KA^{-1}B=0.2926<1$. The {DC gain remains smaller than $1$ also for perturbed time constants, which} guarantees a unique unstable equilibrium point, that is, stable oscillations for $r=0$ (Figure \ref{fig:LMI_dominance_eg}, right).
	
	\begin{figure}[!h]
		\centering
		\includegraphics[width=0.5\textwidth]{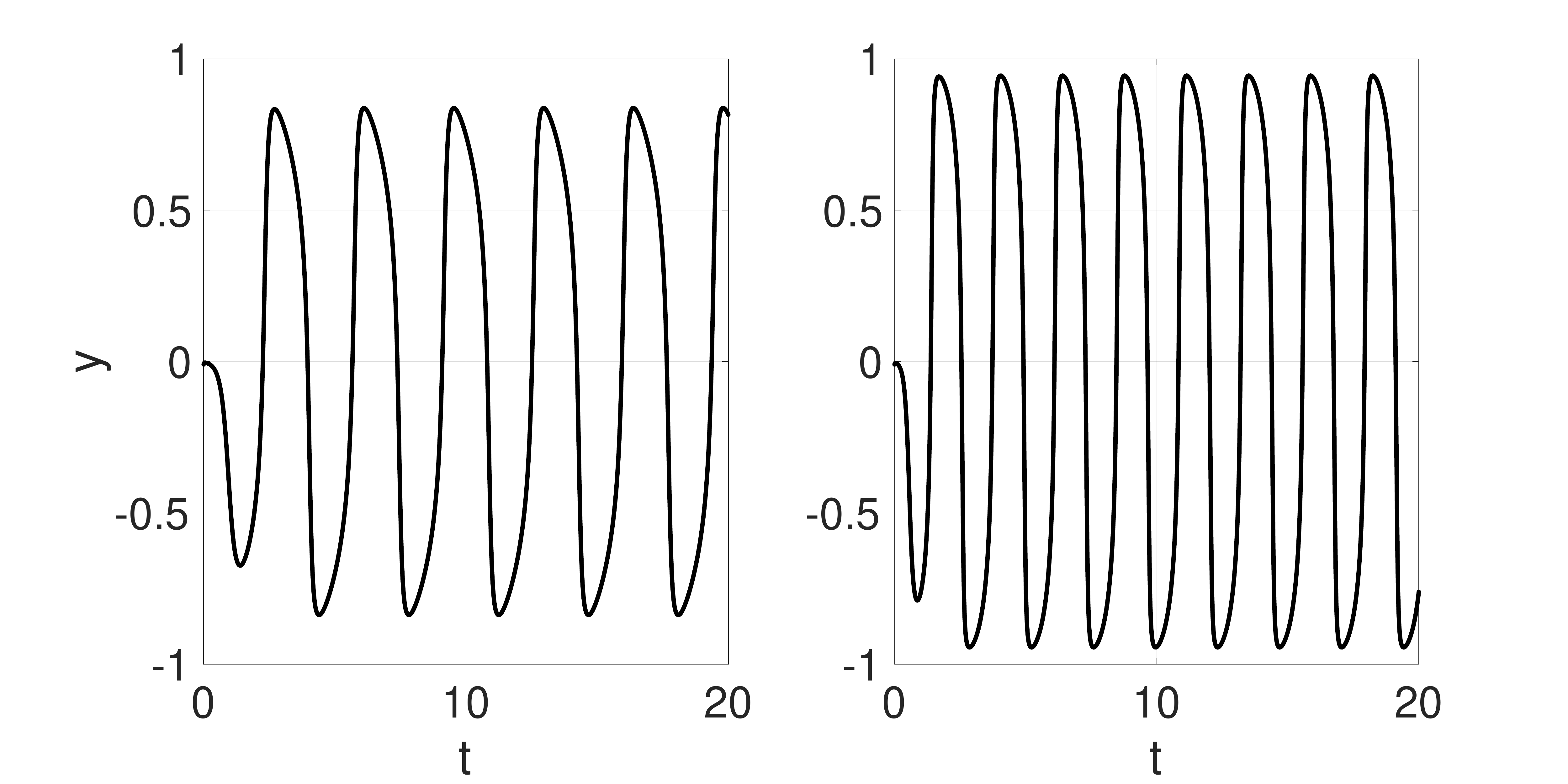}
		\caption{Output $y$ of the closed loop \eqref{eq:LMI_SS_general} for $\tau_l=0.01$, $\tau_p=0.1$, $\tau_n=1$. \textbf{Left}: oscillation for nominal design. \textbf{Right}: oscillation for the robust design.} 
		\label{fig:LMI_dominance_eg}
	\end{figure}
	
	{
	Even if we did not enforce any specific sign pattern on the controller gains, 
	both nominal and perturbed controllers show a negative gain associated with the slow network $\mathcal{C}_n$ and a 
	positive gain associated with the fast network $\mathcal{C}_p$. Remarkably, the gains of the perturbed case have larger magnitude. 
	This is a further confirmation of the role of mixed-feedback in the generation of stable and robust oscillations.} 
		
	\section{{LMI design for} robustness and passivity}
	\label{Sec:Robust_and_Interconnection}
	\subsection{Robust oscillations via robust $2$-dominance}
	{In Section \ref{Section:Example1_robustness} we have characterized the robustness of the oscillations generated via mixed-feedback using graphical arguments based on Nyquist diagrams. Leveraging LMIs, Section \ref{sec:example2_LMIfirstorder} provides a first example of a design procedure that optimizes the controller parameters to guarantee robust oscillations to prescribed parametric uncertainties. In this section we will consider dynamic uncertainties as in classical robust control to develop a design framework for robust oscillations. Our approach 
	takes advantage of small gain results for dominance theory.}
	
	Consider the mixed feedback closed loop in Figure \ref{fig:uncertainty_block}, where dynamic uncertainties are represented by the block $\Delta$. {Note that the uncertain dynamics are not necessarily linear. Furthermore, the figure represents the case of multiplicative uncertainties on the output for simplicity} but our approach is general. The mixed feedback closed loop has the state space representation
	\begin{equation} \label{eq:LMI_Robust_SS_A_B}
		\begin{cases}
			\dot{x}=Ax+B_1u+B_2 w\\
			u=\varphi(Kx)\\
			y=C_1x\\
			z=C_2x			
		\end{cases}
	\end{equation}
	where $A$, $B_1$ and $C_1$ are the nominal state space matrices as \eqref{eq:LMI_SS_general}. $B_2$ and $C_2$ characterize how the uncertain dynamics affect the nominal system. For example, the specific case in Figure \ref{fig:uncertainty_block} takes $B_2=[0\ ...\ 0\ 1/\tau_p\ 1/\tau_n]^T$ and $C_2=C_1$.
	
	\begin{figure}[htbp]
		\centering
		\includegraphics[width=0.38\textwidth]{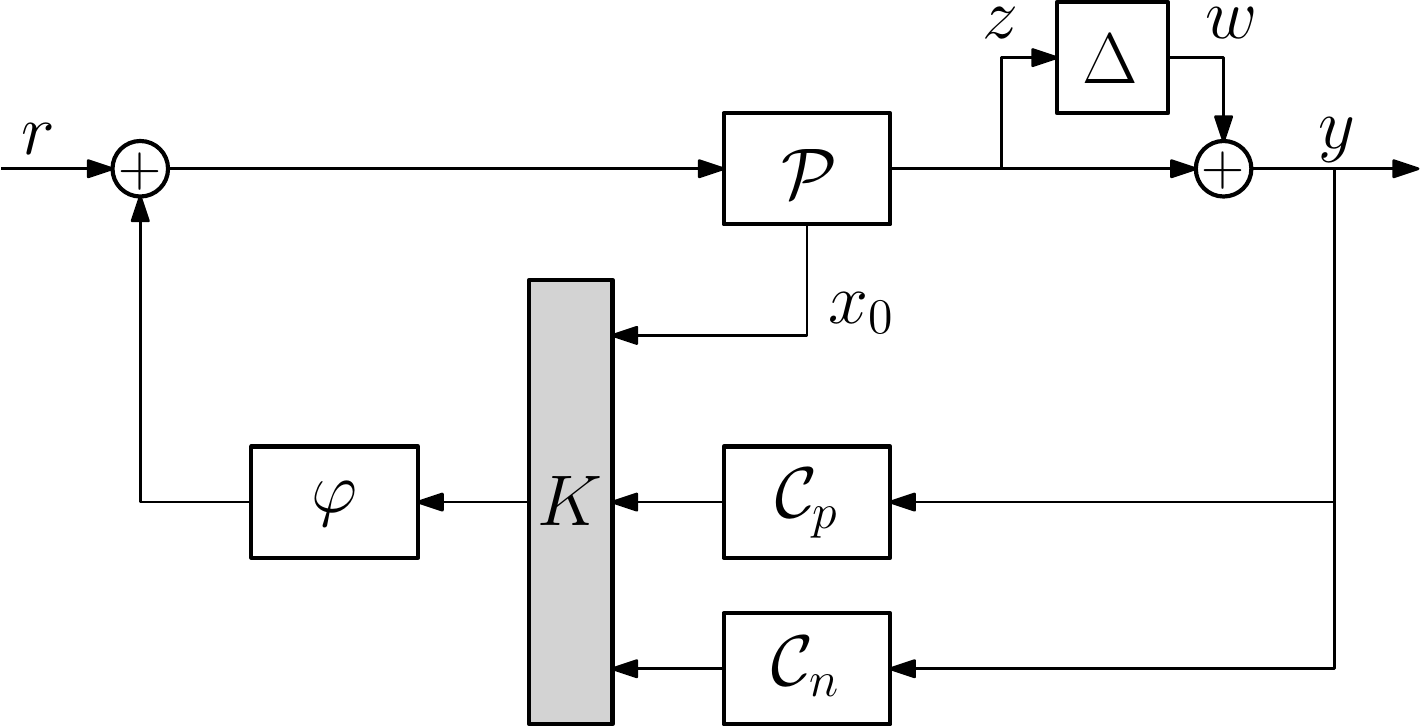}
		\caption{State-feedback for robust oscillations.}
		\label{fig:uncertainty_block}
	\end{figure} 
	
	\begin{theorem}
	\label{th:LMI_Robustness}
		{Suppose that the uncertain dynamics $w=\Delta(z)$ has $0$-gain less than $\frac{1}{\gamma}$ with rate $\lambda$}. Then, the closed loop given by \eqref{eq:LMI_Robust_SS_A_B} and $\Delta$ is $2$-dominant if there exist a symmetric matrix $Y$ with inertia $(2,0,n)$, a matrix $Z$, and $\varepsilon > 0$ such that
		\small
		\begin{subequations}
			\label{eq:Robustness_LMI}
			\begin{align}
				&\begin{bmatrix}\label{eq:Robustness_LMI_1}
					YA^T+AY+2\lambda Y+\varepsilon I & B_2 & YC_2^T\\
					B_2^T & -\gamma I& 0\\
					C_2 Y & 0 & -\gamma  I
				\end{bmatrix} \leq 0\\
				&\begin{bmatrix}
					YA^T\!+\!Z^T\!B_1^T\!+\!AY\!+\!B_1Z\!+\!2\lambda Y\!+\!\varepsilon I & B_2 & YC_2^T\\
					B_2^T & -\gamma I& 0\\
					C_2 Y & 0 & -\gamma  I
				\end{bmatrix} \leq 0 \label{eq:Robustness_LMIb}
			\end{align}	
		\end{subequations}
		\normalsize
		and $K=ZY^{-1}$. $\hfill\lrcorner$
	\end{theorem}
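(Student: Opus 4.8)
The plan is to combine the state-feedback $2$-dominance design of Theorem~\ref{th:state_feedback_design} with the small-gain theorem for dominant systems, Theorem~\ref{th:differential_small_gain}. The key observation is that \eqref{eq:Robustness_LMI} should be read as the ($\gamma$-normalized) $p$-gain LMI \eqref{eq:p_gain_LMI_design} applied to the nominal closed loop \eqref{eq:LMI_Robust_SS_A_B} with $u=\varphi(Kx)$: it certifies that the map $w\mapsto z$ is strictly differentially $2$-dissipative with respect to the gain supply \eqref{p_gain_LMI} with rate $\lambda$ and gain $\gamma$. Once the nominal system is $2$-dominant with $2$-gain $\gamma$ from $w$ to $z$, and the uncertainty $\Delta$ is $0$-dominant with $0$-gain smaller than $1/\gamma$, the feedback interconnection is $(2+0)$-dominant by small gain.

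First I would write the prolonged dynamics of \eqref{eq:LMI_Robust_SS_A_B} with the state feedback inserted: $\delta\dot x=(A+B_1\partial\varphi(Kx)K)\delta x+B_2\delta w$ and $\delta z=C_2\delta x$. Since $\partial\varphi\in[0,1]$, the Jacobian $A+B_1\partial\varphi(Kx)K$ lies in $\mathrm{ConvexHull}\{A,\,A+B_1K\}$ for every $x$, exactly as in the proof of Theorem~\ref{th:state_feedback_design}. By the convex relaxation of Section~\ref{sec:convex_relaxation_LMI}, differential $2$-dissipativity with the gain supply \eqref{p_gain_LMI} (taking $y\to z$, $u\to w$, $B\to B_2$, $C\to C_2$, $D=0$) holds if \eqref{eq:p_gain_LMI_design} is satisfied at the two vertices $A_1=A$ and $A_2=A+B_1K$, for a symmetric $P$ of inertia $(2,0,n)$ and some $\varepsilon>0$.

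Next I would perform the usual change of variables $Y=P^{-1}$, $Z=KY$ (and rescale $\varepsilon$), and pre/post-multiply the two vertex LMIs by $\mathrm{diag}(Y,I,I)$. The vertex $A_1=A$ produces \eqref{eq:Robustness_LMI_1} and the vertex $A_2=A+B_1K$ produces \eqref{eq:Robustness_LMIb}; conversely any $(Y,Z,\varepsilon)$ solving \eqref{eq:Robustness_LMI} yields, via $P=Y^{-1}$ and $K=ZY^{-1}$, a solution of \eqref{eq:p_gain_LMI_design} at both vertices. The inertia of $Y$ is handled exactly as in Theorem~\ref{th:state_feedback_design}: the leading $(1,1)$ block of \eqref{eq:Robustness_LMI_1} gives $YA^T+AY+2\lambda Y+\varepsilon I\le 0$, so whenever $\lambda$ splits the spectrum of $A$ into two eigenvalues to the right of $-\lambda$ and $n$ to the left (which is exactly the standing assumption that the plant is fast relative to the mixed-feedback controller), $Y$, hence $P$, has inertia $(2,0,n)$. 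Thus the nominal system \eqref{eq:LMI_Robust_SS_A_B} with the designed $K$ is strictly $2$-dominant with rate $\lambda$ and has differential $2$-gain $\gamma$ from $w$ to $z$.

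Finally I would invoke Theorem~\ref{th:differential_small_gain} with $\Sigma_1$ the nominal closed loop ($p_1=2$, $\gamma_1=\gamma$) and $\Sigma_2=\Delta$ ($p_2=0$, $\gamma_2<1/\gamma$ by hypothesis); since $\gamma_1\gamma_2<1$, the feedback interconnection is $(p_1+p_2)=2$-dominant with rate $\lambda$, which is the claim. I do not expect a genuine technical obstacle here: the two points requiring care are bookkeeping rather than analysis, namely (i) checking that the normalization in \eqref{eq:p_gain_LMI_design}--\eqref{eq:Robustness_LMI} is the one consistent with the gain supply \eqref{p_gain_LMI}, so that the $\gamma$ appearing in the LMI is the differential $2$-gain that enters the small-gain condition, and (ii) the inertia argument for $Y$, which is only valid under the spectral-splitting assumption on $A$. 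As with Theorem~\ref{th:state_feedback_design}, this inertia constraint is what makes the synthesis problem non-convex, yet it need not be imposed explicitly once the splitting of $A$ is in place.
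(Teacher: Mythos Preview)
Your proposal is correct and follows essentially the same route as the paper: certify via \eqref{eq:p_gain_LMI_design} at the two vertices $A$ and $A+B_1K$ that the nominal closed loop has differential $2$-gain $\gamma$ from $w$ to $z$, convert to design variables by $Y=P^{-1}$, $Z=KY$ and congruence with $\mathrm{diag}(Y,I,I)$, and then close the argument with the small-gain Theorem~\ref{th:differential_small_gain}. Your write-up is in fact more explicit than the paper's about the convex-hull relaxation and the inertia-of-$Y$ point, but the underlying argument is identical.
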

	
	\begin{proof}
		By the differential small gain theorem \ref{th:differential_small_gain}, the $0$-gain $\frac{1}{\gamma}$ of $\Delta$ sets a strict upper bound of the $2$-gain of the mixed feedback closed loop, $\gamma$. Using \eqref{eq:p_gain_LMI_design}, \eqref{eq:LMI_Robust_SS_A_B} has $2$-gain 
		$\gamma$ if there exist a matrix $K$, a symmetric matrix $P$ with inertia $(2,0,n)$, and $\epsilon > 0$ such that
		\begin{equation}
			\label{eq:Robustness_LMI_proof}
			\begin{bmatrix}
				A_i^TP+PA_i+2\lambda P +\epsilon I & PB_2 &C_2^T\\
				B_2^TP & -\gamma I & 0\\
				C_2 & 0 & -\gamma I
			\end{bmatrix}\leq 0
		\end{equation}
		where $A_i\in\{A,A+B_1K\}$. Let $Y=P^{-1}$, $Z=KY$, and $\varepsilon = \epsilon Y Y$, 
		\eqref{eq:Robustness_LMI} is thus obtained by 
		pre- and post-multiplying \eqref{eq:Robustness_LMI_proof} by 
		\[\begin{bmatrix}
			Y&0&0\\
			0&I&0\\
			0&0&I
		\end{bmatrix}. \vspace*{-5mm}
		\] 
	\end{proof}
	
		{
		For large gains $\gamma$, the feasibility of \eqref{eq:Robustness_LMI} reduces to the feasibility of \eqref{eq:LMI_2_dominant}, discussed in Section \ref{Sec:State_Feedback_Design}. The assumption on the dynamic uncertainties guarantees that $\Delta$ belongs to a family of incrementally stable perturbations whose decay rate is faster than $\lambda$. This means that the perturbed plant dynamics remain faster than the controller dynamics, as in the nominal case.}	 
	
	{Together with $2$-dominance, to guarantee robust oscillations we need
	to ensure that the instability of the equilibrium point at the origin is also robust to perturbations. 
	This can be established via robustness criteria for instability (see e.g. \cite{hara2020robust}). 
	In this paper, we take advantage again of small gain results for dominance.}
	\begin{itemize}
		\item 
		We can \emph{verify} that the instability persists. 
		Consider the system linearization at the equilibrium point given by $\bar{A}=A+BK$.
		We can combine \eqref{eq:LMI_2_dominant} and \eqref{eq:LMI_unstable_origin} 
		with 	 \eqref{eq:p_gain_LMI_design} , the latter for $A_i = \bar{A}$, $\lambda=0$, and for $\gamma = \gamma_{\mathrm{ins}}$, to certify that the instability will be preserved by any perturbation $\Delta$ with $0$-gain less than $1/\gamma_{\mathrm{ins}}$ (for $\lambda \!=\! 0$). This follows from Theorem \ref{th:differential_small_gain}.
			\item 
			We can also \emph{design} the feedback $K$ to enforce the desired level of robust instability of the equilibrium.
			To achieve this, we pair \eqref{eq:Robustness_LMI} to an additional LMI of the form \eqref{eq:Robustness_LMIb}
			where we take $\lambda = 0$ and $\gamma = \gamma_{\mathrm{ins}}$.
	\end{itemize}
	\begin{remark}
	 {
	 The control action represented by state-feedback matrix $K$ has no effect
		on the closed loop gains when $\partial\varphi(Kx)=0$. This is formalized by
		\eqref{eq:Robustness_LMI_1}, which shows how the open-loop plant features 
		limit the achievable $2$-gain. Performances can be improved via pre-compensation, 
		following the approach of Section \ref{Sec:State_Feedback_Design}.A.
	 From Remark \ref{rem:precompensator}, 
	 a pre-compensator $K_0$ can be designed to recover performances, as illustratred by Figure \ref{fig:Prestabilize_Feedback}.
	 Using \eqref{eq:P_0_SS} to denote the plant dyamics before pre-compensation, the state feedback $K_0$ 
	 can be computed with} the additional LMI
	\begin{equation}
		\small
		\setlength\arraycolsep{1.5pt}\begin{bmatrix}
			Y_0A_0^T\!+\!Z_0^T\!B_0^T\!+\!A_0Y_0\!+\!B_0Z_0\!+\!2\lambda Y_0\!+\!\varepsilon I&B_0&Y_0C_0^T\\
			B_0^T&-\gamma I& 0\\
			C_0Y_0 & 0 & -\gamma I
		\end{bmatrix} \leq 0
	\end{equation}
	\normalsize
	in the unknowns $Y_0 \!=\! Y_0^T \!>\! 0$, $Z_0$, and $\varepsilon \!> \!0$. $K_0\!=\!Z_0Y_0^{-1}$. $\hfill\lrcorner$
	\end{remark}

	\subsection{$2$-passivity and interconnections}
	The mixed feedback closed loop can also be adapted to passive interconnections,
	taking advantage of Theorem \ref{th:Interconection_theorem_general_passive_Ver}.
	The goal of this section is to design the controller gains
	to achieve $2$-dominance for closed-loop interconnections represented in Figure \ref{fig:Load_block}. 
	We assume that $\mathcal{P}_{ex}$ is a generic external dynamics, $0$-passive 
	with excess of output passivity $\alpha$ at rate $\lambda$. 
	This implies that $\mathcal{P}_{ex}$ has fast transients and its shifted dynamics are incrementally passive.
	\begin{figure}[!h]
		\centering
		\includegraphics[width=0.3\textwidth]{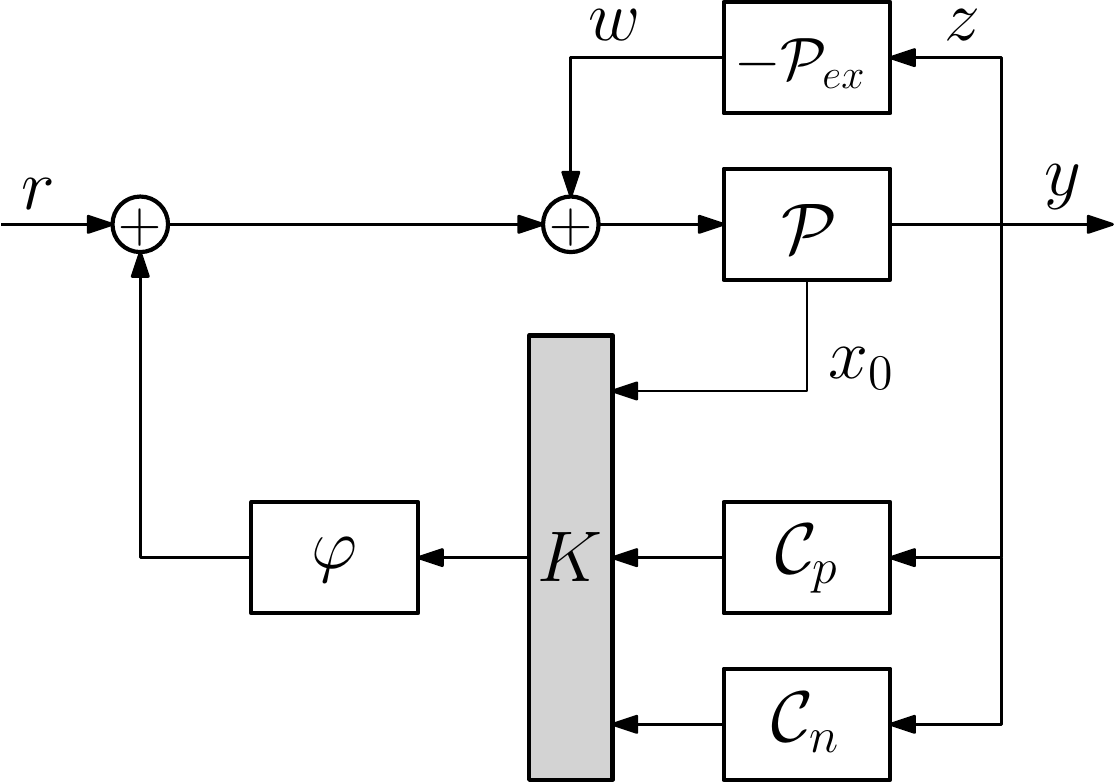}
		\caption{State feedback for passive interconnections.}
		\label{fig:Load_block}
	\end{figure} 
	
	\begin{theorem}
		\label{th:LMI_passive_interconnect}
		Consider a $0$-passive system $\mathcal{P}_{ex}$ with excess of output passivity $\alpha>0$ at rate $\lambda$.
		Then, the closed loop given by \eqref{eq:LMI_Robust_SS_A_B} and $w = -\mathcal{P}_{ex}(z)$ is $2$-dominant 
		if there exist a symmetric matrix $Y$ with inertia $(2,0,n)$, a matrix $Z$, $\mu<\alpha$, and $\varepsilon > 0$ such that
		\begin{subequations}
			\label{eq:Gerenal_passive_interconnection_LMI}
			\begin{align}
				&\begin{bmatrix}
					YA^T\!+\!AY\!+\!2\lambda Y\!+\!\varepsilon I & B\!-\!YC^T\\
					B^T\!-\!CY & -\mu I\\
				\end{bmatrix}\leq0 \label{eq:passivity_first}\\
				&\begin{bmatrix}
					YA^T\!+\!Z^TB^T\!+\!AY\!+\!BZ\!+\!2\lambda Y\!+\!\varepsilon I & B\!-\!YC^T\\
					B^T\!-\!CY & -\mu I
				\end{bmatrix}\leq0
			\end{align}
		\end{subequations}
		and $K=ZY^{-1}$.
	\end{theorem}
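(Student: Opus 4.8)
The plan is to read the interconnection $w = -\mathcal{P}_{ex}(z)$ as a negative feedback connection of two $p$-passive systems and then invoke Theorem~\ref{th:Interconection_theorem_general_passive_Ver}, exactly as the proofs of Theorems~\ref{th:state_feedback_design} and \ref{th:LMI_Robustness} reduce matters to the convex-relaxation/congruence machinery of Section~\ref{Section:p-dissipativity}. First I would let $\Sigma_1$ be the mixed-feedback closed loop \eqref{eq:LMI_Robust_SS_A_B} with the state feedback $u = \varphi(Kx)$ already substituted, viewed as an open system from input $w$ to output $z = Cx$, whose prolonged dynamics are $\delta\dot{x} = (A + B\,\partial\varphi(Kx)K)\delta x + B\,\delta w$, $\delta z = C\,\delta x$; and I would let $\Sigma_2 = \mathcal{P}_{ex}$, which by hypothesis is $0$-passive at rate $\lambda$ with excess of output passivity $\alpha > 0$, i.e. supply matrix $\left[\begin{smallmatrix}-\alpha I & I\\ I & 0\end{smallmatrix}\right]$. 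With $u_1 = w$, $y_1 = z$, $u_2 = z$, $y_2 = \mathcal{P}_{ex}(z)$, the constraint $w = -\mathcal{P}_{ex}(z)$ is exactly the connection $u_1 = -y_2$, $u_2 = y_1$ of Theorem~\ref{th:Interconection_theorem_general_passive_Ver}. Hence it suffices to show that $K = ZY^{-1}$ makes $\Sigma_1$ strictly $2$-passive at rate $\lambda$ with supply matrix $\left[\begin{smallmatrix}0 & I\\ I & \mu I\end{smallmatrix}\right]$, i.e. with $\alpha_1 = 0$ and $\mu_1 = \mu$: then, reading off $\alpha_2 = \alpha$, $\mu_2 = 0$ for $\mathcal{P}_{ex}$, the interconnection conditions $\alpha_1 - \mu_2 \ge 0$ and $\alpha_2 - \mu_1 = \alpha - \mu > 0$ both hold thanks to $\mu < \alpha$, and Theorem~\ref{th:Interconection_theorem_general_passive_Ver} yields that the closed loop is $(2+0)$-dominant with rate $\lambda$.

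The $2$-passivity of $\Sigma_1$ is then extracted from \eqref{eq:Gerenal_passive_interconnection_LMI} by the same steps used for Theorems~\ref{th:state_feedback_design} and \ref{th:LMI_Robustness}. Since $\partial\varphi \in [0,1]$, the state Jacobian $A + B\,\partial\varphi(Kx)K$ lies in $\text{ConvexHull}(\{A, A+BK\})$ for every $x$, so by Section~\ref{sec:convex_relaxation_LMI} it is enough to enforce the differential dissipation inequality \eqref{eq:dissipativity_LMI}, for the supply \eqref{eq:excess/shortage_rate} with $\alpha = 0$, at the two vertices $A_i \in \{A, A+BK\}$. With $D = 0$, expanding \eqref{eq:dissipativity_LMI} turns this into
\begin{equation*}
\begin{bmatrix} A_i^{T}P + PA_i + 2\lambda P + \epsilon I & PB - C^{T}\\ B^{T}P - C & -\mu I \end{bmatrix} \le 0, \qquad i \in \{1,2\},
\end{equation*}
for some symmetric $P$ and $\epsilon > 0$. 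Setting $Y = P^{-1}$, $Z = KY$, $\varepsilon = \epsilon YY$ and pre- and post-multiplying by $\operatorname{diag}(Y, I)$, the vertex $A_1 = A$ yields \eqref{eq:passivity_first} while $A_2 = A + BK$ yields the second inequality of \eqref{eq:Gerenal_passive_interconnection_LMI}; the transformation is reversible, so \eqref{eq:Gerenal_passive_interconnection_LMI} is equivalent to the vertex LMIs. Because $Y$ has inertia $(2,0,n)$, so does $P = Y^{-1}$, which is the inertia required by Definition~\ref{de:p-dissipativity} for $p = 2$. Thus $\Sigma_1$ is strictly $2$-passive with the claimed supply, and the first paragraph gives the result.

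I expect the only delicate points to be bookkeeping rather than conceptual. The main one is getting the passivity accounting right so that the generic conditions of Theorem~\ref{th:Interconection_theorem_general_passive_Ver} collapse to exactly $\mu < \alpha$: this forces the choice $\alpha_1 = 0$ for $\Sigma_1$ and the observation that $\mathcal{P}_{ex}$, being characterised only by an output-passivity excess, has $\mu_2 = 0$. A minor second point is checking that the congruence by $\operatorname{diag}(Y,I)$ sends $\epsilon I$ to $\epsilon YY$, which is still uniformly positive definite since $Y$ is invertible, so strictness is preserved. Finally, as in the discussion following Theorem~\ref{th:state_feedback_design}, feasibility with $\operatorname{In}(Y) = (2,0,n)$ is not an extra hypothesis but is forced by the $(1,1)$-block Lyapunov inequality in \eqref{eq:Gerenal_passive_interconnection_LMI} once the standing time-scale assumption places the two controller lags to the right of $-\lambda$.
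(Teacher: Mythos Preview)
Your proposal is correct and follows essentially the same route as the paper's proof: reduce to Theorem~\ref{th:Interconection_theorem_general_passive_Ver} by showing that the mixed-feedback loop is $2$-passive with shortage of input passivity $\mu$, obtain the vertex LMIs via the convex relaxation $\partial\varphi\in[0,1]$, and recover \eqref{eq:Gerenal_passive_interconnection_LMI} by the congruence $\operatorname{diag}(Y,I)$ with $Y=P^{-1}$, $Z=KY$. Your explicit bookkeeping of $(\alpha_1,\mu_1,\alpha_2,\mu_2)=(0,\mu,\alpha,0)$ and of the inertia transfer from $Y$ to $P$ is a welcome clarification but does not differ in substance from the paper's argument.
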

	\begin{proof}
		From Theorem \ref{th:Interconection_theorem_general_passive_Ver}, the excess of output $0$-passivity $\alpha$ sets an upper bound of the 
		shortage of input $2$-passivity $\mu<\alpha$ for the mixed feedback closed loop. 
		In this case, the LMI condition \eqref{eq:LMI_excess/shortage_passivity} reads
		\[\begin{bmatrix}
			A_i^TP+PA_i+2\lambda P +\epsilon I& PB-C^T\\
			B^TP-C & -\mu I\\
		\end{bmatrix}\leq 0.\]
		where $A_i\in\{A,A+BK\}$. Set $Y=P^{-1}$,  $Z=KY$, and $\varepsilon = \epsilon \lambda_{min}(YY)$. Then, 
		\eqref{eq:Gerenal_passive_interconnection_LMI} is obtained by pre- and post-multiplying 
		the matrix above by 
		\[\begin{bmatrix}
			Y&0\\
			0&I\\
		\end{bmatrix}. \vspace{-4mm}\]
	\end{proof}
	
	{When the shortage of input passivity is large, $\mu \gg 0$, the feasibility of 
	\eqref{eq:Gerenal_passive_interconnection_LMI} reduces to the feasibility of \eqref{eq:LMI_2_dominant}.
	\eqref{eq:Gerenal_passive_interconnection_LMI} guarantees that the closed loop system given by
	the plant $\mathcal{P}$ and the mixed feedback controller $\mathcal{C}$ is $2$-passive from $w$} {to $z$.
	This property is later used to guarantee that negative feedback interconnections with $0$-passive dynamics preserve $2$-dominance.
	As in the previous section, $2$-dominance is not enough to guarantee 
	robust oscillations. Additional conditions must be enforced to guarantee the instability of
	equilibria}.

%

	\subsection{Example: controlled oscillations in a large circuit modelling (simplified) neural dynamics}
	
	Consider the large circuit in Figure \ref{fig:RCnetwork}. 
	{$\Sigma_a$ is given by the interconnection of the mixed feedback controller 
	with a RC circuit (plant)}. {{$\Sigma_a$ can be considered as a simplified conductance-based model of a neuron, 
	with the mixed feedback controller modelling fast and slow conductances affecting 
	the dynamics of the neuron membrane.}
	\color{black}$\Sigma_b$ represents a spatially discretized cable dynamics, modelling how current and voltage distribute along neurites. 
	Their interconnection satisfies  $v_0^a=v_0^b$ and $i_0^a=-i_0^b$. 
	
	The mixed feedback loop $\Sigma_a$ is given by \eqref{eq:LMI_SS_general}. From  \eqref{eq:LMI_SS_example1},  $A$ is given by $\tau_l=R_0C_0=0.01$ ($R_0=100$ and $C_0=10^{-4}$), and we keep $\tau_p=0.1$ and $\tau_n=1$, as in the other examples. Considering 
	$i_0^a$ as input and $v_0^a$ as output, we have  
	$B=\begin{bmatrix}
		\frac{1}{C_0}&0&0
	\end{bmatrix}^T$ and 
	$C=\begin{bmatrix}
		1&0&0
	\end{bmatrix}$.
	$\Sigma_b$ has input $v_0^b$ and output $i_0^b$. The model is taken from cable theory \cite{tuckwell1988introduction}, where $R_1$ represents the resistance along the fiber and the parallel of $R_2$ and $C_m$ represents the impedance of each segment.  
	For $n$ segments, the admittance of $\Sigma_b$ is recursively described by
	\begin{equation}\label{eq:RC_nseg_TF}
		G_n(s)=\dfrac{1}{R_1+\dfrac{1}{C_ms+1/R_2+G_{n-1}(s)}} , 
	\end{equation}
	with base case
	\begin{equation}
		G_1(s)=\frac{C_mR_2s+1}{C_mR_1R_2s+R_1+R_2}.
	\end{equation}

	
	
	It is easy to show that $G_1(s)$ is positive real (passive). The same result holds for the shifted transfer function $G_1(s-\lambda)$, if $\lambda<1/C_mR_2$, which captures the fact that the zero of $G_1(s)$ lies to the left of $-\lambda$. Under such condition, by induction, $\Sigma_b$ remains $0$-passive for rate $\lambda<1/C_mR_2$, since addition and inversion in \eqref{eq:RC_nseg_TF} preserve passivity, and the elements on the right-hand side of \eqref{eq:RC_nseg_TF} are all positive real. We can further deduce that
	\begin{equation}
		|G_n(j\omega-\lambda)|\leq\frac{1}{R_1},\quad \mbox{if }\lambda<1/C_mR_2,
	\end{equation}
	which indicates that $\Sigma_b$ has an excess of output passivity. As parameters, we take $C_m=C_0=10^{-4}$, $R_1=R_0=100$, and $R_2$ varying in $[300,600]$. 
	
	We consider the mixed state feedback design for passive interconnection and set the dominant rate $\lambda=15$. For all $R_2\in[300, 600]$, $\Sigma_b$ is $0$-passive with an excess of passivity $\alpha>30$, This is verified using \eqref{eq:LMI_excess/shortage_passivity} on a minimal state space realization of \eqref{eq:RC_nseg_TF}. Thus, following Theorem \ref{th:LMI_passive_interconnect}, the state-feedback gains of the mixed feedback loop 
	$\Sigma_a$ are obtained by setting $\mu=30$ in \eqref{eq:Gerenal_passive_interconnection_LMI}. We also enforce \eqref{eq:LMI_unstable_origin} to destabilize the equilibrium at $0$.  The solution
	\[Y=\begin{bmatrix}
		18836.5	&-724.7&	-85.5\\
		-724.7	&-696.9&	-138.6\\
		-85.5	&-138.6&	-81.7
	\end{bmatrix}\]
	has inertia $(2,0,1)$ and the controller gains read
	\[K=ZY^{-1}=\begin{bmatrix}
		-3.1117  &  7.1900 &  -6.5486
	\end{bmatrix}.\]
	The DC gain $-KA^{-1}B=-2.4703<1$ guarantees a unique unstable equilibrium point. The instability of the equilibrium is robust to the interconnection with $\Sigma_b$ since the linearization of $\Sigma_a$ at the origin has $2$-gain  $54.84<1/|G_n(s)|_\infty=100$ (for $\lambda=0$). Thus stable oscillations are guaranteed after interconnection for any length $n$ of the cable $\Sigma_b$, as shown in Figure \ref{fig:RCnetwork_simu} for $n=15$. 
	
	The output $v_0 = v_0^a=v_0^b$  of \eqref{eq:LMI_SS_general} maintains its oscillation pattern for a wide range of $R_2$ values. As the signal travels down the cable $\Sigma_b$ we observe a decay of oscillations magnitude, with the smaller $R_2$ the larger the decay. 
	
	\begin{figure*}[htbp]
		\centering
		\includegraphics[width=0.9\textwidth]{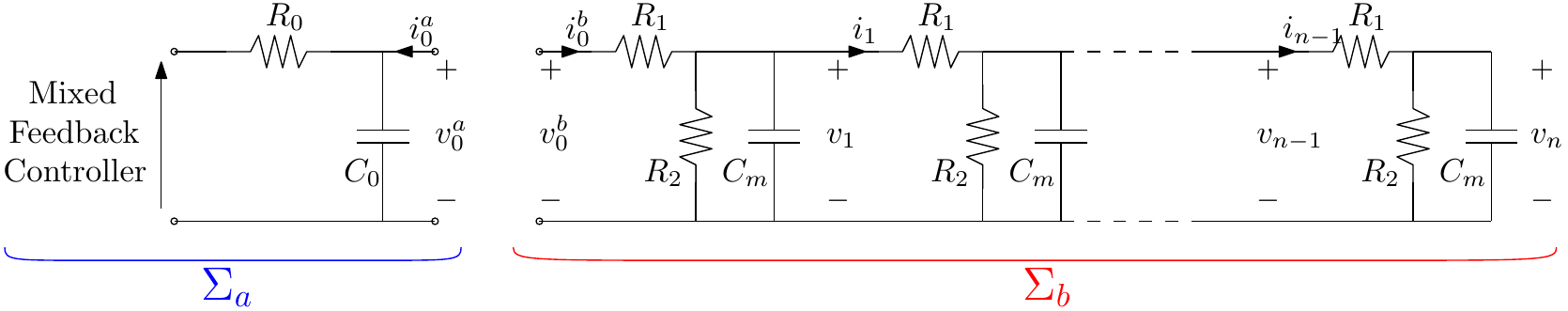}
		\caption{The mixed feedback closed loop $\Sigma_a$ interconnected
		with a passive network $\Sigma_b$.}
		\label{fig:RCnetwork}
	\end{figure*}
	
	\begin{figure}[htbp]
		\centering
		\includegraphics[width=0.9\columnwidth]{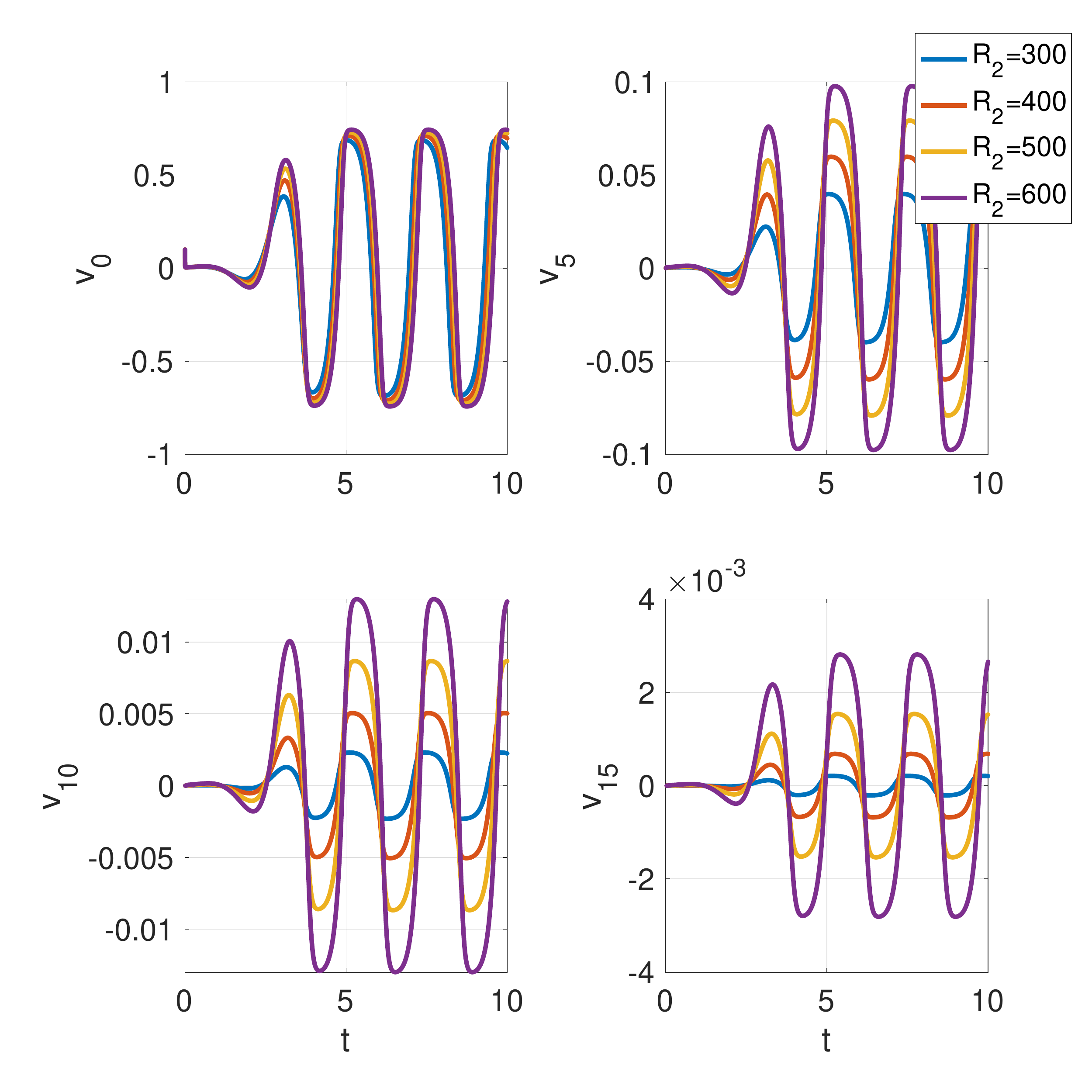} \vspace{-5mm}
		\caption{Sampled voltages of the interconnected system given by $\Sigma_a$ and $\Sigma_b$  for $R_2\in\{300,400,500,600\}$.}
		\label{fig:RCnetwork_simu}
	\end{figure}

	\section{CONCLUSIONS}
	
	We have studied the mixed-feedback controller as a robust generator of endogenous oscillations in closed loop.
	We have shown that the balance between fast positive and slow negative feedback is crucial to achieve stable oscillations.
	Grounded on dominance theory, we have derived sufficient conditions on the feedback gain $k$ and on
	the balance $\beta$ to achieve stable and robust oscillations in closed loop. These conditions have also been 
	extended to state-feedback design. Using LMIs, we have derived systematic design procedures to
	guarantee robust oscillations to bounded dynamic uncertainties and for passive interconnections. 
	Our design shows strong analogies with classical feedback design for stability. 
	This suggests a number of possible extensions, like the use of weighting functions for robustness, 
	or the characterization of mixed controllers based on output feedback, via state estimation.
	The results of the paper provide a theoretical justification to the observations from system biology and neuroscience that 
	mixed feedback is a fundamental mechanism for robust oscillations. 
	Our mixed feedback controller is limited to a single nonlinearity (saturation). 
	This leaves open questions of scalability to larger systems with several nonlinearities and of implementation into simple hardware. 
	This will be the object of future research.
	 

	\bibliography{Introduction_Ref}
	\bibliographystyle{ieeetr}
	
	\begin{IEEEbiography}[{\includegraphics[width=1in,height=1.25in,clip,keepaspectratio]{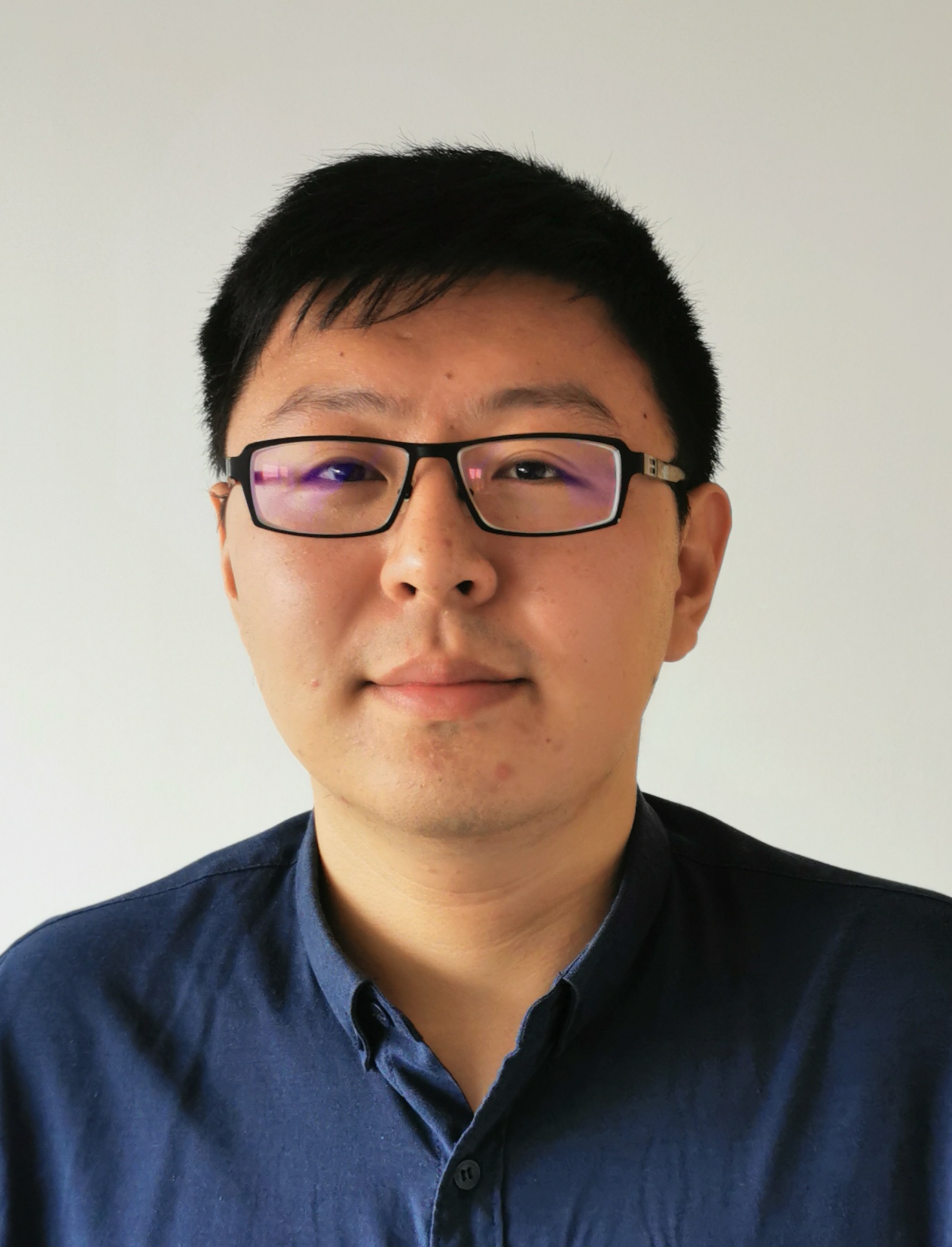}}]{Weiming Che} received the B.A. and M.Eng. degrees in Information Engineering both from the University of Cambridge, UK, in 2018. He is currently pursuing the Ph.D. degree with the University of Cambridge, UK. His research interests include the analysis and control of nonlinear system, specifically  bistable switches and oscillators.
	\end{IEEEbiography}

	\begin{IEEEbiography}
	[{\includegraphics[width=1in,height=1.25in,clip,keepaspectratio]{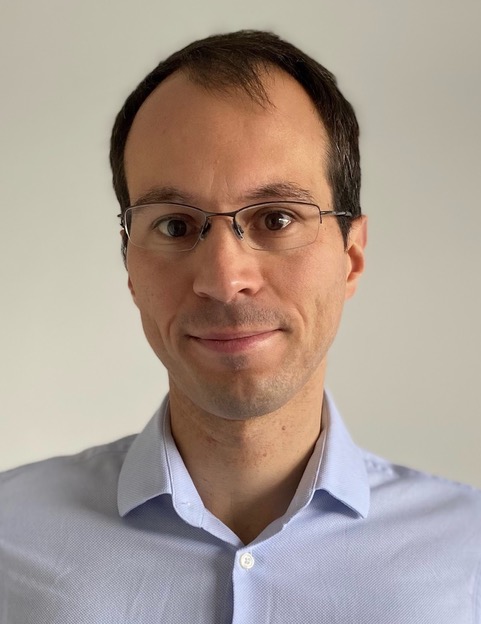}}]{Fulvio Forni} received the Ph.D. degree in computer science and control engineering from the University of Rome Tor Vergata, Rome, Italy, in 2010. In 2008–2009, he held visiting positions with the LFCS, University of Edinburgh, U.K. and with the CCDC of the University of California Santa Barbara, USA. In 2011–2015, he held a post-doctoral position with the University of Liege, Belgium (FNRS). He is currently Associate Professor in the Department of Engineering, University of Cambridge, UK. Dr. Forni was a recipient of the 2020 IEEE CSS George S. Axelby Outstanding Paper Award.
	\end{IEEEbiography}
	
\end{document}